\let\left\mleft
\let\right\mright
\theoremstyle{plain}
\newtheorem{theorem}{Theorem}[section]  %
\newtheorem{lemma}[theorem]{Lemma}
\newtheorem{fact}[theorem]{Fact}
\newtheorem{cor}[theorem]{Corollary}
\newtheorem{claim}[theorem]{Claim}
\theoremstyle{definition}  %
\newtheorem{definition}[theorem]{Definition}
\newtheorem{remark}[theorem]{Remark}
\newenvironment{proofof}[1]{\begin{proof}[Proof of #1]}{\end{proof}}
\crefname{algocf}{Algorithm}{Algorithms}
\Crefname{algocf}{Algorithm}{Algorithms}
\crefname{prob}{Problem}{Problems}
\crefname{claim}{Claim}{Claims}
\crefname{cor}{Corollary}{Corollaries}
\crefname{fact}{Fact}{Facts}
\DeclarePairedDelimiter{\norm}{\lVert}{\rVert}
\DeclarePairedDelimiter{\bk}{(}{)}
\DeclarePairedDelimiter{\Bk}{[}{]}
\DeclarePairedDelimiter{\BK}{\{}{\}}
\DeclarePairedDelimiter{\angbk}{\langle}{\rangle}
\DeclarePairedDelimiter{\abs}{\lvert}{\rvert}
\DeclareMathOperator*{\E}{\mathbb{E}}
\let\Pr\PrAux
\DeclareMathOperator{\poly}{poly}
\newcommand{\F}{\mathbb{F}}
\renewcommand{\tilde}{\widetilde}
\newcommand{\eqdef}{\eqqcolon}
\newcommand{\defeq}{\coloneqq}
\newcommand{\eps}{\varepsilon}
\newcommand{\N}{\mathbb{N}}
\newcommand{\R}{\mathbb{R}}
\newcommand{\Z}{\mathbb{Z}}
\renewcommand{\l}{\ell}
\renewcommand{\epsilon}{\eps}
\newcommand{\numberthis}{\addtocounter{equation}{1}\tag{\theequation}}
\xpatchcmd\thmt@restatable{%
\csname #2\@xa\endcsname\ifx\@nx#1\@nx\else[{#1}]\fi
}{%
\ifthmt@thisistheone
\csname #2\@xa\endcsname\ifx\@nx#1\@nx\else[{#1}]\fi
\else
\csname #2\@xa\endcsname[{Restated}]
\fi}{}{}
\newcommand{\defn}[1]{\textbf{\emph{#1}}}
\newcommand{\nnz}{\textup{nnz}}
\newcommand{\rank}[1][\F_p]{\textup{rank}^{#1}}
\renewcommand{\R}{\mathcal{R}}
\newcommand{\bool}{\textup{bool}}
\renewcommand{\vec}[1]{\bm{\mathrm{#1}}}
\newcommand{\circledM}{\tikz[baseline=(char.base)]{\node[shape=circle,draw,inner sep=0.4pt,minimum size=0.4em] (char) {\fontsize{3}{4}\selectfont M};}}
\newcommand{\maj}[2][n]{#2^{\circledM #1}}
\newcommand{\kro}[2][n]{#2^{\otimes #1}}
\newcommand{\Maj}{\textup{Maj}}
\newcommand{\indicator}[1]{\mathbbm{1}{\Bk*{#1}}}
\newcommand{\boolRigidity}[2]{\R_{#1}^{\F_p^{\bool}}\bk*{#2}}
\newcommand{\pre}{\text{pre}}
\newcommand{\C}{\mathbb{C}}
\title{Low Rank Matrix Rigidity: \\ Tight Lower Bounds and Hardness Amplification}
\author{Josh Alman\thanks{Columbia 
University. \texttt{josh@cs.columbia.edu}. Work supported in part by NSF Grant CCF-2238221.} 
    \and 
    Jingxun Liang\thanks{Carnegie Mellon University. \texttt{jingxunl@andrew.cmu.edu}. Work performed while the author was visiting Columbia University and supported in part by NSF Grant CCF-2238221.}}
\date{}
\begin{document}

\maketitle

\begin{abstract}
For an $N \times N$ matrix $A$, its rank-$r$ rigidity, denoted $\R_A(r)$, is the minimum number of entries of $A$ that one must change to make its rank become at most $r$. Determining the rigidity of interesting explicit families of matrices remains a major open problem, and is central to understanding the complexities of these matrices in many different models of computation and communication. We focus in this paper on the Walsh-Hadamard transform and on the `distance matrix', whose rows and columns correspond to binary vectors, and whose entries calculate whether the row and column are close in Hamming distance. Our results also generalize to other Kronecker powers and `Majority powers' of fixed matrices. We prove two new results about such matrices.

First, we prove new rigidity lower bounds in the low-rank regime where $r < \log N$. For instance, we prove that over any finite field, there are constants $c_1, c_2 > 0$ such that the $N \times N$ Walsh-Hadamard matrix $H_n$ satisfies $$\R_{H_n}(c_1 \log N) \geq N^2 \left( \frac12 - N^{-c_2} \right),$$ and a similar lower bound for the other aforementioned matrices. This is tight, and is the new best rigidity lower bound for an explicit matrix family at this rank; the previous best was $\R(c_1 \log N) \geq c_3 N^2$ for a small constant $c_3>0$. It also rules out an improvement to the current best constant-depth linear circuits for $H_n$ and other Kronecker powers by using asymptotic improvements to low-rank rigidity upper bounds; the current best uses the rank-1 rigidity of constant-sized matrices.

Second, we give new hardness amplification results, showing that rigidity lower bounds for these matrices for slightly higher rank would imply breakthrough rigidity lower bounds for much higher rank. For instance, if one could prove $$\R_{H_n}(\log^{1 + \varepsilon} N) \geq N^2 \left( \frac12 - N^{-1/2^{(\log \log N)^{o(1)}}} \right)$$ over any finite field for some $\varepsilon>0$, this would imply that $H_n$ is \emph{Razborov rigid}, giving a breakthrough lower bound in communication complexity. This may help explain why our best rigidity lower bounds for explicit matrices are stuck: there seems to be a big gap between what is known and what is required in the rank regime for Razborov rigidity, but our new required lower bound for rank $\log^{1 + \varepsilon} N$ appears close to what current techniques can prove.
\end{abstract}

\thispagestyle{empty}
\newpage\pagenumbering{arabic}
\section{Introduction}
\label{sec:introduction}
The \defn{rigidity} of a $N \times N$ matrix $A$ over a field $\F$ for some rank $r$, denoted by $\R_A^{\F}(r)$, is defined as the minimum number of entries needed to be changed in $A$ in order to reduce the rank of $A$ to at most $r$. 
This concept, introduced by Valiant \cite{valiant1977graphtheoretic}, has broad connections to many topics in complexity theory including circuit lower bounds \cite{valiant1977graphtheoretic,goldreich2016matrix,alman2017probabilistic}, communication complexity \cite{razborov1989rigid,wunderlich2012theorem}, and data structure lower bounds \cite{dvir2019static,natarajanramamoorthy2020equivalence}. In all these connections, researchers have shown that proving a \emph{lower bound} on the rigidity of a matrix implies a breakthrough lower bound for computing that matrix in other models of computation. 

In these applications, the goal is to prove a rigidity lower bound on an \defn{explicit} family of matrices. We say a family of matrices $\BK{A_N}_{N\in \mathbb{N}}$ is explicit, if there is a uniform polynomial-time algorithm that computes all the entries of the matrix $A_N \in \F^{N \times N}$ within time $\poly(N)$. Focusing on explicit matrices is important since lower bounds, for both rigidity and applications, are typically not hard or as interesting to prove for non-explicit matrices. As an example, a counting argument~\cite{valiant1977graphtheoretic} shows that a random $\{-1,1\}$ matrix is very rigid, enough to imply lower bounds via all the known connections, but this is not particularly exciting since counting arguments also show that a random function does not have, for instance, small circuits or efficient communication protocols. 

Moreover, one typically focuses on particular families of explicit matrices which are of interest in the applications. For instance, much work has focused on the rigidity of the Walsh-Hadamard transform, since this is a matrix that is used frequently, both as a linear transformation in circuits, and as the inner product mod 2 function in communication complexity.

\subsection{Rank Parameter Regimes}

Different applications of matrix rigidity focus on different choices of the rank parameter $r$. 
There are three main parameter regimes which have previously been studied. 

The \defn{high-rank regime} where $r \ge N^{\Omega(1)}$ was first considered by Valiant \cite{valiant1977graphtheoretic}, as a tool for proving circuit lower bounds. Valiant proved that, for any matrix family $\BK{A_N}_{N \in \mathbb{N}}$ where $A_N \in \F^{N \times N}$, if there exists a constant $\varepsilon>0$ such that the rigidity of each $A_N$ is lower bounded by $\R^{\F}_{A_N} (N/ \log \log N) \ge N^{1 + \varepsilon}$, then the linear transformation that takes a vector $x \in \F^N$ as an input and outputs $A_N x$ cannot be computed by an arithmetic circuit of depth $O(\log N)$ and size $O(N)$. 
We say the matrix family $\BK{A_N}_{N\in \mathbb{N}}$ is \defn{Valiant rigid}, if it satisfies this rigidity lower bound.
In this high-rank regime, many candidate matrices that were previously conjectured to be Valiant rigid, including some of the most important matrices in this context like the Walsh-Hadamard transform and discrete Fourier transform, were proven to be non-rigid, by a recent line of work \cite{alman2017probabilistic,dvir2019matrix,dvir2020fourier,alman2021kronecker}.

The \defn{mid-rank regime} where $N^{o(1)} > r > \log^{\omega(1)} N$ was first considered by Razborov \cite{razborov1989rigid} (see also \cite{wunderlich2012theorem}), exploring the connection between matrix rigidity and communication complexity of Boolean functions. In the two-party communication complexity model, Alice and Bob want to compute a Boolean function $f: \BK{0,1}^{n} \times \BK{0,1}^n \to \BK{0,1}$ where each party owns $n$ of the $2n$ bits of the input. Many different models of communication have been defined~\cite{goos2018landscapea}, and one typically aims to minimize how much the parties need to communicate with each other to evaluate $f$. Razborov proved that the communication complexity to compute the function $f$ is related to the rigidity of the \emph{communication matrix} $A_f$ of the function $f$. $A_f$ is a $2^n \times 2^n$ matrix whose columns and rows are indexed by strings in $\BK{0,1}^n$, and whose entries are given by $A_f[x,y] = f(x,y)$ for any $(x,y) \in \BK{0,1}^{n} \times \BK{0,1}^n$. Razborov proved that, if the rigidity of each $A_f$ is lower bounded by $\R^{\F}_{A_f} \bk{2^{\log^{\omega(1)} n}} \ge \Omega(4^n)$, then the Boolean function $f$ cannot be computed in $\textsf{PH}^{cc}$, the analogue of polynomial hierarchy in communication complexity. We say the matrix family $\BK{A_N}_{N\in \mathbb{N}}$ is \defn{Razborov rigid}, if it satisfies this rigidity lower bound. Hence, proving rigidity lower bound in the mid-rank regime can imply super strong communication complexity lower bounds.

Despite the importance of rigidity lower bounds in the mid-rank regime, our known rigidity lower bounds for explicit matrices appear quite far from proving that any explicit matrix is Razborov rigid. The best lower bound for an explicit matrix is only $\R^{\F}_{A_f} \bk{2^{\log^{\omega(1)} n}} \ge \Omega(4^n / 2^{\log^{\omega(1)} n})$, which seems to be a big gap from the goal of $\Omega(4^n)$. In this paper, we will give a new reduction from mid-rank to lower rank rigidity, where just a small improvement to our known lower bounds will suffice to achieve Razborov rigidity.

More recently, researchers \cite{alman2021kronecker,alman2023smaller} have started to explore applications of rigidity in the \defn{low-rank regime} where $r = \log^{O(1)} N$. Alman \cite{alman2021kronecker} related the rigidity of a Kronecker power matrix $\kro{A}$ in the {low-rank regime} with the size of \emph{constant-depth} linear circuits that compute the linear transformation $\kro{A}: x \mapsto \kro{A} x$. For instance, \cite{alman2021kronecker} gave the rigidity upper bound $\R^{\F}_{H_4} (1) \le 96$, and used it to obtain a depth-2 linear circuit of size only $O(N^{1.476})$ for the $N \times N$ Walsh-Hadamard transform $H_n$, giving the first improvement on the folklore fast Walsh-Haramard transform algorithm which achieves $O(N^{1.5})$. In this paper, we will prove new, strong rigidity lower bounds in this low-rank regime, proving 
 limits on this approach to designing smaller circuits.

\subsection{Matrices of Interest}

In this paper, we return to the challenge of understanding the rigidity of important families of explicit matrices. 
We focus in particular on two classes of explicit matrix families: the matrix family formed by the Kronecker powers of a small matrix, such as the Walsh-Hadamard transform, and the matrix family formed by a new variant on the Kronecker power we introduce, called the Majority power, including a special family of matrices called the ``distance matrix'', defined as follows:
\begin{itemize}
    \item The $n$-th \defn{Kronecker power} $\kro{A}$ of a matrix $A \in \F^{q \times q}$ is a $q^n \times q^n$ matrix with its columns and rows being indexed by vectors from $[q]^n$,\footnote{We use the notation $[q]$ to represent the set $\BK{0,1,\ldots,q-1}$ for any positive integer $q$.} and entries being $\kro{A}[x, y] \defeq \prod_{i=1}^n A[x_i, y_i]$ for any $x = (x_1, \ldots, x_n) \in [q]^n$ and $y = (y_1, \ldots, y_n) \in [q]^n$.
    \item  In particular, the $n$-th \defn{Walsh-Hadamard matrix} is defined as $H_n \defeq  \bk*{\begin{matrix}
    1 & 1\\
    1 & -1
    \end{matrix}}^{\otimes n} $.
    \item The $n$-th \defn{Majority power} $\maj{A}$ of a matrix $A \in \BK{-1,1}^{q \times q}$ is a $q^n \times q^n$ matrix, with its entry being $\maj{A}[x, y] \defeq \Maj\bk{A[x_1, y_1],\ldots,A[x_n, y_n]}$ for any $x = (x_1, \ldots, x_n) \in [q]^n$ and $y = (y_1, \ldots, y_n) \in [q]^n$, where $\Maj: \BK{-1, 1}^n \to \BK{-1, 1}$ is the majority function.
    \item In particular, the $n$-th \defn{distance matrix} $M_n$ is the $n$-th Majority power of $M_1 \defeq \bk*{\begin{matrix}
    1 & -1\\
    -1 & 1
    \end{matrix}} $. In other words, the entry $M_n[x,y]$ calculates whether the Hamming distance between $x$ and $y$ is greater than $n/2$, for any vectors $x, y \in \BK{0,1}^n$.
\end{itemize}

We focus on these matrix families because of their prominence in applications. For instance, in communication complexity, the Walsh-Hadamard transform is the communication matrix for the ``inner product mod $2$'' function, and Kronecker powers more generally can capture any function which is the parity of constant-sized functions. Similarly, the distance matrix is the communication matrix for testing whether the Hamming distance of the players' inputs is at most a threshold, and the Majority power can capture more general communication problems about thresholds of constant-sized functions \cite{yao2003power,gavinsky2004quantum,huang2006communication}. %

We will prove two types of results for these matrices. First, we will prove new rigidity lower bounds, which for low enough rank are essentially tight and improve on the state of the art for rigidity lower bounds on explicit matrices. Second, we will establish hardness amplification arguments, showing that even a small improvement on our new rigidity lower bound may be challenging: a slight improvement will imply the Razborov rigidity of these matrices, a breakthrough in the fields of matrix rigidity and communication complexity.

\subsection{Previous Rigidity Lower Bounds}
There has been plenty of work on rigidity lower bounds for explicit matrices including Walsh-Hadamard matrices \cite{lokam2001spectral,kashin1998improved,dewolf2006lower}, discrete Fourier transform matrices \cite{shparlinski1999}, Cauchy matrices \cite{shokrollahi1997remark} and generator matrices for linear codes \cite{friedman1993note,pudlak1994combinatorialalgebraic,shokrollahi1997remark}. The best known rigidity lower bound for any explicit matrix over any field is $\R_{A}^{\F_p} (r) \ge \Omega\bk*{\frac{N^2}{r} \cdot \log \frac{N}{r}}$. In particular, for Walsh-Hadamard matrix, the best known rigidity lower bound is $\R_{H_n}^{\F} (r) \ge \frac{N^2}{4r}$, for any field $\F$ which does not have characteristic $2$. (Note that over characteristic $2$, the matrix $H_n$ has rank $1$.) 

\paragraph*{Untouched submatrix argument.}
All the aforementioned lower bounds for explicit matrices are based on a combinatorial approach which we call an \defn{untouched submatrix argument}. The high-level idea of this argument is that, to reduce the rank of a matrix $A$ to $r$, we need to change at least one entry of $A$ in \emph{every} submatrix of $A$ with rank more than $r$. For example, if the matrix $A$ has all of its submatrices being full-rank (such $A$ is called a \defn{totally regular matrix}), then the rigidity of $A$ for rank $r$, $\R_A^{\F}(r)$, is at least the number of entries we need to choose to touch every $r \times r$ submatrix of $A$. We can bound this by applying known results from extremal graph theory (Zarankiewicz problem \cite{kovari1954problem,bollobas2004extremal}), and prove that the rigidity of any totally regular matrix is lower bounded by $\R_A^{\F}(r) \ge \Omega\bk*{\frac{N^2}{r} \cdot \log \frac{N}{r}}$, for rank $r$ with $\Omega(\log N) \le r \le N/2$. There are many known examples of explicit totally regular matrices, such as the Cauchy matrix and generator matrices for some linear codes~\cite{friedman1993note,pudlak1994combinatorialalgebraic,shokrollahi1997remark}.

However, the untouched submatrix argument has a known limitation, and cannot prove a rigidity lower bound better than $\Omega\bk*{\frac{N^2}{r} \cdot \log \frac{N}{r}}$: As shown in \cite{lokam2000rigidity,lokam2009complexity}, there exist totally regular matrices where this lower bound is tight, i.e., $\R_A^{\F}(r) \le O\bk*{\frac{N^2}{r} \cdot \log \frac{N}{r}}$. Hence, in order to prove strong rigidity lower bounds required by the applications (e.g., Valiant rigidity and Razborov rigidity lower bounds), we need to consider new techniques.

\paragraph*{Spectral method.}
In this paper, we will also build on a technique that was previously introduced for bounding a \emph{restricted} version of matrix rigidity called \defn{bounded rigidity}. For a matrix $A$, rank parameter $r$, and positive real number $\theta$, we define $\R_A^{\C}(r, \theta)$ as the minimum number of entries that we need to change in $A$ to reduce the rank to at most $r$, where we are only allowed to change each of those entries by adding or subtracting a value which is \emph{at most $\theta$ in magnitude}. Prior work has given stronger bounds on this bounded rigidity by using spectral tools~\cite{lokam2001spectral,dewolf2006lower,rashtchian2016bounded}. Indeed, if $A$ is approximated by a low-rank matrix $L$ in this bounded sense, then we can control the error in other distances, e.g., the Frobenius distance $\norm{A - L}_{F}$, and then use tools from spectral analysis to further bound this soft version of rigidity.

There are two main caveats of this approach. First, to have the spectral tools and the soft version of rigidity, we need to be in the number field $\C$, hence there is no known rigidity lower bound over a finite field $\F_p$ from this approach. Second, because this only works for bounded rigidity\footnote{Spectral methods have also been used to show that submatrices of $H_n$ or other matrices have high rank, to be used in conjunction with the untouched submatrix method~\cite{kashin1998improved, lokam2001spectral,dewolf2006lower}. However, to our knowledge, spectral methods have only been applied to the sparse error matrix $A-L$ when studying bounded rigidity.}, it does not yield most of the applications of matrix rigidity. In particular, one of the main applications of bounded rigidity was proving ``bounded-coefficient'' arithmetic circuit lower bounds for linear transforms like the Walsh-Hadmard transform. This was seen as a barrier to designing smaller circuits, since all previous circuit constructions were bounded-coefficient. However, a recent line of work~\cite{alman2021kronecker,alman2023smaller} actually showed that such bounded-coefficient lower bounds can be overcome by designing much smaller unbounded-coefficient circuits (by using the low-rank rigidity upper bounds we mentioned earlier!), making bounded-coefficient lower bounds less interesting.

In this paper, we will establish new techniques to get rid of these two caveats, and prove a (regular) rigidity lower bound over $\F_p$ via a new type of spectral method.

\subsection{Boolean Rigidity}
For technical reasons, in this paper, we will focus on a notion of rigidity called \defn{Boolean rigidity}. This notion appears implicitly in many prior works on rigidity and Boolean function analysis, although we're not aware of any that explicitly define it. For each element $x$ in a finite field $\F_p$, we define the \defn{Booleanization} of $x$ as 
    \begin{align*}
        \bool(x) \defeq \begin{cases}
            1 & \text{if } x \equiv 1 \pmod p,\\
            -1 & \text{otherwise,}
        \end{cases}
    \end{align*}
which maps the element $x \in \F_p$ into a Boolean value. The Booleanization of a matrix $A$ is defined as the matrix obtained by taking the Booleanization of each entry of $A$. 
We consider the following variant on rigidity where the low-rank matrix only needs to approximate the given matrix in this Booleanized sense:
\begin{definition}[Boolean rigidity]
    The \defn{Boolean rigidity} of a matrix $A \in \F_p^{N \times N}$ of rank $r$ over $\F_p$, denoted by $\boolRigidity{A}{r}$ is defined as the minimum $s$, such that there is a matrix $L \in \F_p^{N \times N}$ with $\rank(L) \le r$ and a subset of indices $I \subset [N]^2$ with $|I| \ge N^2 - s$, satisfying
    \begin{align*}
        \bool(A[i,j]) = \bool(L[i,j]), \quad \forall (i,j) \in I.
    \end{align*}
\end{definition}

The concept of Boolean rigidity is very closely related to the (regular) rigidity in the following sense: if all the entries of $A$ are $1$ or $-1$ in $\F_p$, then for any rank $r$, we have $$\R_A^{\F_p}(r) \ge \boolRigidity{A}{r} \ge \R_{A}^{\F_p}(O(r^{p-1})).$$ The first inequality is because any rank-$r$ matrix $L$ that is close to $A$ within Hamming distance $s$ must also be close to $A$ within Boolean distance $s$. The second inequality is because if a rank-$r$ matrix $L$ is close to $A$ within Boolean distance $s$, we can construct a matrix $\tilde{L}$ defined as $\tilde{L}[i,j] \defeq \bool\bk{L[i,j]}$ for each entry $(i,j) \in [N]^2$, which is close to $A$ within Hamming distance $s$ as each entry of $A$ is either $1$ or $-1$. By a standard argument using Fermat's little theorem (see \cref{rmk:boolean_close_to_regular} below) we see that $\rank(\tilde{L}) \leq O((\rank(L))^{p-1})$.

Based on this relationship between the Boolean rigidity and the (regular) rigidity, one can see that a Boolean rigidity lower bound which we will prove is stronger than a (regular) rigidity lower bound with the same parameter. Moreover, these two rigidity concepts are equivalent regarding to Razborov rigidity: A $2^n \times 2^n$ matrix $A$ is Razborov rigid, if and only if it satisfies $\boolRigidity{A}{2^{\log^{\omega(1)} n}} \ge \Omega(4^n)$. 
We focus here on Boolean rigidity, even though it is nearly equivalent to (regular) rigidity, because it will simplify the statements of our hardness amplification results below.

The two most closely-related notions of rigidity that have been previously studied are `toggle rigidity' and `sign-rank rigidity'. In toggle rigidity, the low-rank matrix must only have entries among $\{-1,1\}$, whereas in sign rank rigidity, the low-rank matrix must only have the same sign as $A$ in each entry to be considered correct. These notions have previously been considered for both communication complexity and circuit complexity~\cite{razborov1989rigid,wunderlich2012theorem,rashtchian2016bounded,alman2017probabilistic}.

\subsection{Our Results}

In this paper, we give new rigidity lower bounds in the low-rank regime, and new hardness magnification results which show that a substantial improvement to our low-rank lower bounds would give a breakthrough lower bound in communication complexity.

\paragraph*{Nearly tight rigidity lower bound in the low-rank regime.}
Our first result is a generic rigidity lower bound via the spectral method. It shows that the aforementioned limitations of the spectral method in matrix rigidity can be overcome, and spectral arguments can give tight lower bounds for (regular or Boolean) rigidity over a finite field.
\begin{restatable}{theorem}{RigidityLbFromSingularValue}
    \label{thm:rigidity_lb_from_singular_value}
    Let $p$ be a constant prime number and $A \in \BK{-1,1}^{N \times N}$ be a matrix with largest singular value $\sigma_1$ (over $\mathbb{C}$). Then, there is a constant $c >1 $ (depending only on $p$) such that for any rank $r$,
    \begin{align*}
        \boolRigidity{A}{r} \ge N^2 \bk*{\frac{1}{2} - \frac{c^r \sigma_1}{N}}.
    \end{align*}
\end{restatable}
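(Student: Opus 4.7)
The plan is to execute a spectral argument directly on the $\{-1,1\}$-matrix obtained by Booleanizing the low-rank witness. Suppose $\boolRigidity{A}{r} = s$ and let $L \in \F_p^{N \times N}$ be the witness: $L$ has $\F_p$-rank at most $r$, and its Booleanization $\tilde L \defeq \bool(L) \in \{-1,1\}^{N \times N}$ agrees with $A$ on at least $N^2 - s$ entries. Viewing both $A$ and $\tilde L$ as real $\pm 1$ matrices, the Frobenius inner product simply counts agreements minus disagreements, so I will first observe that $\langle A, \tilde L\rangle \ge (N^2 - s) - s = N^2 - 2s$. Then I will upper bound the same quantity by the trace-norm duality inequality $|\langle A, \tilde L\rangle| \le \sigma_1(A) \cdot \|\tilde L\|_*$ and rearrange to pin down $s$.

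The crux of the argument is to control $\|\tilde L\|_*$, and the key trick is a bound on the complex rank of $\tilde L$ of the form $\rank[\C](\tilde L) \le p^r$. The idea: since $L$ has $\F_p$-rank at most $r$, all of its columns lie in a single $r$-dimensional $\F_p$-subspace of $\F_p^N$, which contains at most $p^r$ vectors. So $L$ has at most $p^r$ distinct columns, and since $\bool$ acts entrywise it can only identify columns, never split them, so $\tilde L$ also has at most $p^r$ distinct columns. The column space of $\tilde L$ is therefore spanned by at most $p^r$ vectors, which gives $\rank[\C](\tilde L) \le p^r$. Combined with $\|\tilde L\|_F = N$ (every entry is $\pm 1$), Cauchy-Schwarz on the singular values yields $\|\tilde L\|_* \le \sqrt{\rank[\C](\tilde L)} \cdot \|\tilde L\|_F \le p^{r/2} N$.

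Putting the pieces together gives $N^2 - 2s \le \sigma_1 \cdot p^{r/2} \cdot N$, i.e., $s \ge N^2 \bk*{\tfrac{1}{2} - \tfrac{p^{r/2} \sigma_1}{2N}}$, which matches the theorem with $c \defeq \sqrt p$ (the factor of $2$ in the denominator is absorbed for free, since $\sqrt{p} > 1$ whenever $p \ge 2$). I expect the novel and crucial step to be the distinct-columns bound converting an $\F_p$-rank bound into a $\C$-rank bound on the Booleanization; everything else is standard spectral duality. This transfer is precisely what bypasses the two caveats the paper identifies for prior spectral approaches (that they worked only over $\C$ and only for the bounded-rigidity variant), and it should apply uniformly to any prime $p$ since the constant $c = \sqrt p$ depends only on $p$.
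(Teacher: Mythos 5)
Your proof is correct, but it takes a genuinely different route from the paper's at the key step. The paper proves \cref{thm:rigidity_lb_from_singular_value} via \cref{lm:transformation_of_low-rank_matrix_to_C}: writing $L = U^\top V$ over $\F_p$, it uses a $p$-th root of unity and two polynomial interpolations ($f\circ G$) to build an \emph{explicit} factorization $\tilde{L}=\tilde{U}^\top\tilde{V}$ of the Booleanization over $\C$ with rank $(p^3+1)^r$ and all factor entries bounded by $C^r$, and then bounds $\sum_{i,j}A[i,j]\tilde{L}[i,j]$ one rank-one term at a time via Cauchy--Schwarz and $\sigma_1$. You instead bound $\rank[\C](\bool(L))\le p^r$ by a counting argument --- the columns of $L$ lie in an $r$-dimensional $\F_p$-subspace, so $L$, and hence its entrywise Booleanization, has at most $p^r$ distinct columns --- and then finish with nuclear-norm/operator-norm duality together with $\norm{\tilde{L}}_*\le\sqrt{\rank[\C](\tilde{L})}\,\norm{\tilde{L}}_F=p^{r/2}N$; all steps check out (the agreement count gives $\langle A,\tilde{L}\rangle\ge N^2-2s$, and the factor $2$ only helps). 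Your route sidesteps the bounded-entry factorization entirely, since the boundedness you need is supplied by $\tilde{L}$ itself being a $\pm1$ matrix through its Frobenius norm, rather than by controlling the entries of $\tilde{U},\tilde{V}$; this makes the argument shorter and gives a quantitatively better constant $c=\sqrt{p}$, compared with the paper's $c$, which absorbs $C^{2}(p^3+1)$ with $C$ coming from interpolation coefficients, and would correspondingly improve the constants $c_1,c_2$ in \cref{thm:kronecker_lb} and $\beta$ in \cref{thm:hamming_lb}. What the paper's lemma buys in exchange is the extra structural conclusion (a bounded-coefficient low-rank decomposition over $\C$), which is in the spirit of the earlier bounded-rigidity spectral methods and could be reusable elsewhere, but it is not needed to obtain the stated theorem.
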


We will see below that \Cref{thm:rigidity_lb_from_singular_value} is tight for many matrices of interest. That said, even generically, we can see that it is nearly tight for small rank $r$ for \emph{any} matrix $A \in \BK{-1,1}^{N \times N}$. Indeed, for any such matrix, its \emph{rank-1} rigidity is at most $N^2/2$, since either the all-$1$ matrix or the all-$(-1)$ matrix will match $A$ in at least half its entries. 

Applying this rigidity lower bound to Kronecker powers and the distance matrix, we obtain the following rigidity lower bounds. (Recall that a Boolean rigidity lower bound also implies the same (regular) rigidity lower bound.)

\begin{restatable}{theorem}{KroneckerLb}
    \label{thm:kronecker_lb}
    Let $p$ be a constant prime number and $q$ be a constant integer. Suppose $A \in \BK{-1,1}^{q \times q}$ is a matrix with $\rank(A) > 1$. Then, there are $c_1 > 0$, $c_2 \in (0,1)$  such that for all positive integers $n$, the Kronecker power $\kro{A}$ has the rigidity lower bound
    \begin{align*}
        \boolRigidity{\kro{A}}{c_1 n} \ge q^{2n} \bk*{\frac{1}{2} - c_2^n}.
    \end{align*}
\end{restatable}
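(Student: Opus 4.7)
The plan is to apply \Cref{thm:rigidity_lb_from_singular_value} directly to the $q^n \times q^n$ matrix $\kro{A}$, using the well-known fact that the singular values of a Kronecker product are the products of the singular values of the factors. In particular, the largest singular value of $\kro{A}$ satisfies $\sigma_1(\kro{A}) = \sigma_1(A)^n$.

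The essential quantitative input is a bound of the form $\sigma_1(A) \le \alpha q$ for some constant $\alpha \in (0,1)$ depending only on $A$. This follows from the assumption that $\rank(A) > 1$: since $A$ has $\pm 1$ entries, its Frobenius norm satisfies $\|A\|_F^2 = q^2$, and this equals $\sum_{i} \sigma_i(A)^2$. Hence $\sigma_1(A) \le q$, with equality possible only when $A$ has rank $1$. Because $\rank(A) > 1$, we get the strict bound $\sigma_1(A) = \alpha q$ with $\alpha < 1$, and this $\alpha$ is a constant depending only on $A$.

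Combining these two steps and plugging into \Cref{thm:rigidity_lb_from_singular_value} (applied with $N = q^n$), for any rank $r$ we obtain
\begin{align*}
\boolRigidity{\kro{A}}{r} \ge q^{2n}\left(\frac{1}{2} - \frac{c^r \cdot (\alpha q)^n}{q^n}\right) = q^{2n}\left(\frac{1}{2} - c^r \alpha^n\right),
\end{align*}
where $c > 1$ is the constant from \Cref{thm:rigidity_lb_from_singular_value} depending only on $p$. I then choose $c_1 > 0$ small enough that $c^{c_1} \alpha < 1$, which is possible since $\alpha < 1$ and $c$ is a fixed constant; concretely, any $c_1 < \log(1/\alpha)/\log c$ will do. Setting $r = c_1 n$ and $c_2 \defeq c^{c_1} \alpha \in (0,1)$ gives the desired bound $\boolRigidity{\kro{A}}{c_1 n} \ge q^{2n}(\tfrac{1}{2} - c_2^n)$.

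There is no real obstacle here beyond verifying the two ingredients above; the entire argument is a reduction to \Cref{thm:rigidity_lb_from_singular_value}. The one place to be slightly careful is justifying that $\sigma_1(A) < q$ strictly when $\rank(A) > 1$ (so that $\alpha$ is bounded away from $1$), and that the multiplicativity of the top singular value under Kronecker products holds over $\mathbb{C}$ even though our matrices are being viewed over $\F_p$ for the rigidity statement — but this is fine because the singular value computation in the hypothesis of \Cref{thm:rigidity_lb_from_singular_value} is defined over $\mathbb{C}$ regardless of the field in which rigidity is measured.
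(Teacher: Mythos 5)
Your proposal is correct, and the overall structure matches the paper's: reduce to \Cref{thm:rigidity_lb_from_singular_value} via the multiplicativity $\sigma_1(\kro{A}) = \sigma_1(A)^n$ of top singular values under Kronecker product, together with a strict bound $\sigma_1(A) < q$, and then pick $c_1$ small enough that $c^{c_1}\sigma_1(A)/q < 1$. The one place you genuinely diverge from the paper is the proof that $\sigma_1(A) < q$ when $\rank(A) > 1$ (the paper's Lemma~\ref{lm:singular_value_of_A}). The paper argues via a Gershgorin-style eigenvector argument on $B = A^\top A$, bounding $\sigma_1^2$ by a column sum $\sum_i |B[i,j]|$ and observing that equality forces all columns of $A$ to be identical. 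You instead use the Frobenius norm identity $\|A\|_F^2 = \sum_i \sigma_i(A)^2 = q^2$ (since $A$ has $\pm 1$ entries), so $\sigma_1 \le q$ with equality forcing all other singular values to vanish, i.e.\ $\rank(A) \le 1$. Your Frobenius argument is shorter and more transparent; the paper's argument gives slightly more structural information (when equality holds the columns are literally all equal), but none of that extra information is used. Both are fine, and the rest of your derivation, including the choice $c_1 < \log(1/\alpha)/\log c$ and $c_2 = c^{c_1}\alpha$, is exactly what is needed.
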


\begin{restatable}{theorem}{HammingLb}
    \label{thm:hamming_lb}
    Let $p$ be a constant prime and $n \in \N$ be a positive integer. 
    For any constant $\eps > 0$, there is a constant $\beta > 0$, such that the distance matrix $M_n$ has the rigidity lower bound
    \begin{align*}
        \boolRigidity{M_n}{\beta \log n} \ge 4^n \bk*{\frac{1}{2} - \frac{1}{n^{1/2 - \eps}}}.
    \end{align*}
\end{restatable}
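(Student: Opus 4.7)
The plan is to reduce the statement to Theorem~\ref{thm:rigidity_lb_from_singular_value} by controlling the spectral norm of $M_n$. Taking $N = 2^n$ in that theorem, it suffices to prove $\sigma_1(M_n) = O(2^n/\sqrt n)$: then for $r = \beta\log n$ the error term becomes $c^r\sigma_1/N = O(n^{\beta\log_2 c - 1/2})$, which is below $n^{-1/2+\eps}$ as soon as $\beta < \eps/\log_2 c$ (with $c$ the constant from Theorem~\ref{thm:rigidity_lb_from_singular_value}), yielding the desired $\tfrac{1}{2} - n^{-1/2+\eps}$ bound.

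The first main step is to diagonalize $M_n$. Since $M_n[x,y]$ depends only on $x\oplus y \in \F_2^n$, I would write $M_n[x,y] = f(x\oplus y)$ where $f\colon \F_2^n \to \{-1,1\}$ is the symmetric Boolean function $f(z) = \Maj\bk{(-1)^{z_1},\ldots,(-1)^{z_n}}$. Then $M_n$ is a convolution operator on the group $\F_2^n$, diagonalized by the Walsh characters $\chi_S(x) = (-1)^{S\cdot x}$ with eigenvalues $\lambda_S = 2^n\hat f(S)$, where $\hat f(S) = 2^{-n}\sum_z f(z)\chi_S(z)$. Since $M_n$ is real symmetric, $\sigma_1(M_n) = \max_S|\lambda_S| = 2^n\cdot \|\hat f\|_\infty$.

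The second main step is to show $\|\hat f\|_\infty = O(1/\sqrt n)$. Because $f$ is symmetric (depends only on $|z|$), $\hat f(S)$ depends only on $k = |S|$; call this common value $\alpha_k$. Parseval gives $\sum_{k=0}^{n}\binom{n}{k}\alpha_k^2 = \E_z f(z)^2 = 1$, so $|\alpha_k| \le \binom{n}{k}^{-1/2} \le 1/\sqrt n$ for every $1 \le k \le n-1$. The remaining two levels $k = 0$ and $k = n$ must be handled by direct computation: $\alpha_0 = \E[f]$ and $\alpha_n = \E[f(z)(-1)^{|z|}]$ both reduce to alternating partial sums of binomial coefficients, and using the identity $\sum_{k=0}^{m}(-1)^k\binom{n}{k} = (-1)^m\binom{n-1}{m}$ they evaluate (up to sign and minor modifications for the even-$n$ tie-breaking convention of $\Maj$) to expressions of the form $\binom{n-1}{\lfloor(n-1)/2\rfloor}/2^{n-1}$, which Stirling bounds by $O(1/\sqrt n)$. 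Combining the three cases gives $\sigma_1(M_n) = O(2^n/\sqrt n)$ and completes the reduction.

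The anticipated obstacle is the second step: the trivial Parseval bound $\|\hat f\|_\infty \le 1$ is too weak to give anything nontrivial from Theorem~\ref{thm:rigidity_lb_from_singular_value}, so we genuinely need the $1/\sqrt n$ savings across \emph{all} Fourier levels. Fortunately, the full symmetry of the majority function turns Parseval into a per-level inequality that handles all but the two extreme levels automatically, and those are dispatched by classical binomial identities. Nothing else is delicate: once the spectral bound is in hand, picking $\beta$ sufficiently small gives the result.
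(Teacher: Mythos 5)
Your proposal is correct, and at the top level it is exactly the paper's argument: reduce \cref{thm:hamming_lb} to \cref{thm:rigidity_lb_from_singular_value} by proving $\sigma_1(M_n) = O(2^n/\sqrt{n})$, diagonalize $M_n$ by the Walsh characters (since $M_n[x,y]$ depends only on $x\oplus y$, as in \cref{clm:list_eigenvalues}), and then choose $\beta$ small enough that $c^{\beta\log n}$ eats only an $n^{\eps}$ factor. The one place you genuinely diverge is in bounding the eigenvalues (\cref{clm:singular_bound_hamming}): the paper bounds every eigenvalue $\lambda_y$ directly by a pairing/cancellation argument (pairing $z$ with $\overline{z}$ when $y=0$, and with $z^{\oplus 1}$ when $y\neq 0$), which collapses each sum to the $O\bk{\binom{n}{\lceil n/2\rceil}}$ terms near weight $n/2$. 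You instead exploit the full symmetry of $f(z)=\Maj\bk{(-1)^{z_1},\ldots,(-1)^{z_n}}$: Parseval gives $\binom{n}{k}\alpha_k^2\le 1$, so all Fourier levels $1\le k\le n-1$ are automatically at most $n^{-1/2}$, and only the two extreme levels $k=0,n$ need the explicit alternating-binomial computation (which, as you note, yields $\Theta\bk{\binom{n-1}{\lfloor n/2\rfloor}/2^{n-1}} = O(1/\sqrt n)$, with the tie-breaking convention only affecting constants). Both routes give the same $O(2^n/\sqrt n)$ bound; yours trades the paper's uniform elementary pairing for a slicker per-level Parseval bound that minimizes case analysis, at the cost of relying on the symmetry of $f$ and still requiring the two boundary levels by hand, while the paper's argument is self-contained and treats all eigenvectors the same way. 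Your final parameter bookkeeping ($\beta<\eps/\log_2 c$, absorbing the hidden constant for large $n$) matches the paper's.
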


As the $2^n \times 2^n$ Walsh-Hadamard matrix $H_n$ is the $n$-th Kronecker power of the $2 \times 2$ Walsh-Hadamard matrix $H_1$, \cref{thm:kronecker_lb} gives a nearly tight rigidity lower bound (up to the constant $c_2$) for Walsh-Hadamard matrices for small ranks over finite fields:
\begin{cor} 
\label{cor:walsh_hadamard_lb}
For all constant primes $p$, and all $n$, 
$$4^n\bk*{\frac{1}{2} - 2^{-O(n)}} \le \R_{H_n}^{\F_p} (\Theta(n)) \le \R_{H_n}^{\F_p}(1) \le 4^n \bk*{\frac{1}{2} - 2^{-\Omega(n)}}.$$
\end{cor}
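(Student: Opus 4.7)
The plan is to derive the three inequalities separately, each by a short and essentially direct argument.

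For the lower bound $4^n(1/2 - 2^{-O(n)}) \leq \R_{H_n}^{\F_p}(\Theta(n))$, I would apply \cref{thm:kronecker_lb} with $q = 2$ and $A = H_1 \in \{-1,1\}^{2 \times 2}$. The matrix $H_1$ has rank $2 > 1$, so the hypotheses are satisfied (the corollary is implicitly restricted to $p \neq 2$, since over characteristic $2$ the matrix $H_n$ itself has rank $1$ and all three quantities in the statement degenerate). This yields constants $c_1 > 0$ and $c_2 \in (0,1)$ with $\boolRigidity{H_n}{c_1 n} \geq 4^n(1/2 - c_2^n)$. Because $H_n$ has $\pm 1$ entries, the inequality $\R_A^{\F_p}(r) \geq \boolRigidity{A}{r}$ from the discussion of Boolean rigidity in Section 1.4 converts this to the required regular-rigidity lower bound, and writing $c_2^n = 2^{-\Omega(n)} = 2^{-O(n)}$ puts it in the stated form.

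The middle inequality $\R_{H_n}^{\F_p}(\Theta(n)) \leq \R_{H_n}^{\F_p}(1)$ is immediate from the monotonicity of rigidity in the rank parameter: allowing a larger target rank can only decrease the number of entries one must change.

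For the upper bound $\R_{H_n}^{\F_p}(1) \leq 4^n(1/2 - 2^{-\Omega(n)})$, I would exhibit the all-ones rank-$1$ matrix $J$ as the approximation; the number of entries one must change is exactly the number of $-1$ entries in $H_n$. Since $H_n[x,y] = (-1)^{x \cdot y}$, the row $x = 0$ contributes $2^n$ entries all equal to $+1$, while for each of the $2^n - 1$ nonzero rows $x$, the map $y \mapsto x \cdot y$ is a nontrivial $\F_2$-linear functional and hence vanishes on exactly $2^{n-1}$ choices of $y$. Summing gives $2^n + (2^n - 1) \cdot 2^{n-1} = 2^{2n-1} + 2^{n-1}$ entries equal to $+1$, hence $2^{2n-1} - 2^{n-1} = 4^n(1/2 - 2^{-(n+1)})$ entries equal to $-1$, matching the stated bound. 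I do not expect any step here to be a real obstacle: the argument is a direct packaging of \cref{thm:kronecker_lb}, trivial monotonicity, and a $\pm 1$ count in $H_n$, and the only point worth verifying is the passage from Boolean to regular rigidity, which uses only the easy direction of the comparison and requires nothing beyond $H_n$ having $\pm 1$ entries.
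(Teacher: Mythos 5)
Your proposal is correct, and for the lower bound and the middle inequality it coincides with the paper's own derivation: apply \cref{thm:kronecker_lb} with $A=H_1$ (which has rank $2>1$), convert via the generic inequality $\R_{H_n}^{\F_p}(r)\ge\boolRigidity{H_n}{r}$, and use monotonicity of rigidity in the rank parameter. The one place you genuinely diverge is the rank-$1$ upper bound: the paper obtains it by citing the bound $\R_{H_n}(1)\le 4^n\bk[\big]{\tfrac12-\Theta(1)\cdot 2^{-n/2}}$ of Alman, Guan and Padaki~\cite{alman2023smaller}, whereas you give a self-contained count, taking the all-ones rank-$1$ matrix and counting the $-1$ entries of $H_n$, which yields $\R_{H_n}^{\F_p}(1)\le 2^{2n-1}-2^{n-1}=4^n\bk[\big]{\tfrac12-2^{-(n+1)}}$. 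Your count is correct and suffices for the corollary as stated, since $2^{-(n+1)}$ is of the form $2^{-\Omega(n)}$; what the citation buys is a quantitatively stronger subtracted term ($\Theta(2^{-n/2})$ rather than $2^{-n-1}$), i.e., a better constant in the exponent and a tighter match with the $2^{-O(n)}$ term of the lower bound, but none of that is needed for the statement. Your parenthetical restriction to $p\neq 2$ is also the right reading: over characteristic $2$ the matrix $H_n$ collapses to rank $1$ (as the paper itself notes), so all three quantities degenerate there.
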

Here, the lower bound is from our \cref{thm:kronecker_lb}, and the upper bound is from prior work on designing constant-depth circuits~\cite{alman2023smaller}. Interestingly, this shows that the rigidity for rank $1$ is hardly smaller than the rigidity for rank $\Theta(n)$ for the $2^n \times 2^n$ Walsh-Hadamard matrix $H_n$. \Cref{thm:hamming_lb} is similarly tight up to the $\varepsilon$ in the exponent, as $\R_{M_n}^{\F_p}(1) \leq 4^n(1/2 - \Theta(1/\sqrt{n}))$ by an upper bound similar to our \Cref{thm:amplification_maj} below.

Note that the lower bound in Corollary~\ref{cor:walsh_hadamard_lb} improves on all the previously known rigidity lower bounds for explicit matrices in the small rank regime: the previous state-of-the-art lower bound 
\cite{friedman1993note,pudlak1994combinatorialalgebraic,shokrollahi1997remark} is $\Omega\bk*{\frac{N^2}{r} \log \frac{N}{r}} = \Omega\bk{N^2}$ when $r = \Theta(\log N)$, where the $\Omega$ hides a small constant. Our Corollary~\ref{cor:walsh_hadamard_lb} improves the leading constant of $N^2$ to the maximum possible $1/2$, and applies in particular to the Walsh-Hadmard transform, for which the previous best lower bound at rank $r = \Theta(\log N)$ was $N^2 / 4r = \Theta(N^2 / \log N)$~\cite{dewolf2006lower}.

As mentioned previously, the low-rank regime has recently been studied extensively in the context of designing smaller low-depth linear circuits. For instance, recent improved circuits for $H_n$ are based on rigidity upper bounds for particular constant sizes: $\R_{H_4}(1) \le 96$~\cite{alman2021kronecker} and $\R_{H_6}(1) \le 1792$~\cite{alman2023smaller}. This is perhaps surprising, as most other applications of matrix rigidity make use of \emph{asymptotic} constructions, rather than constant-sized constructions. As one corollary of our rigidity lower bound, we explain this phenomenon: $H_n$ is too rigid for low ranks when $n$ is super-constant to give improved circuits using the approach of Alman~\cite{alman2021kronecker}, so only constant-sized constructions may give improvements. This motivates searching for techniques other than matrix rigidity upper bounds to get any further improved circuit constructions. See \Cref{sec:obstructiondepth2} for more details.

\paragraph*{Hardness amplification of Boolean rigidity}
Our second, complementary results are new hardness amplification arguments. We give separate arguments for the Kronecker product and the Majority product, as these products yield matrices which are rigid for slightly different parameters. We begin with the Kronecker power.

\begin{restatable}{theorem}{KroneckerAmp}
    \label{thm:amplification_kro}
    Suppose $A \in \BK{-1,1}^{q \times q}$ is a matrix with Boolean rigidity 
      $
      \boolRigidity{A}{r} \le \delta \cdot q^2,
      $
    where $r \le q$ and $\delta\in (0,1/2)$ is a parameter. 
    Assume $A$ has roughly the same number of $1$ and $-1$ entries, i.e., the fraction of $1$'s and the fraction of $-1$'s differ by at most $\alpha \in \bk{0,1}$ in $A$, with $2 \alpha + \delta < {1}/{2}$.
    Then, for any integer parameter $n$, the Boolean rigidity of $\kro{A}$ is at most 
    \begin{align*}
      \boolRigidity{\kro{A}}{2rn} \le q^{2n} \bk*{\frac{1}{2} - \frac{1}{2} \cdot \bk*{\frac{1}{2} - \alpha - \delta}^n}.
    \end{align*}
\end{restatable}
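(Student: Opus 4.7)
Let $L \in \F_p^{q\times q}$ be the rank-$r$ witness of the hypothesis $\boolRigidity{A}{r} \le \delta q^2$, so that $\bool(L)$ agrees with $A$ on at least $(1-\delta) q^2$ entries. Set $\gamma := 1/2 - \alpha - \delta > 0$. The claimed upper bound is equivalent to exhibiting a rank-$\le 2rn$ matrix $L'_n \in \F_p^{q^n \times q^n}$ achieving the correlation
$$\E_{(x,y)\sim [q]^n\times [q]^n}\bk*{\bool(L'_n(x,y))\cdot \kro{A}(x,y)} \;\ge\; \gamma^n$$
against the uniform distribution on $(x, y)$.

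My plan is to take $L'_n$ to be an additive aggregation across the $n$ coordinates,
$$L'_n(x, y) \;=\; \sum_{i=1}^n M^{(i)}(x_i, y_i),$$
where each $M^{(i)} \in \F_p^{q\times q}$ has rank at most $2r$ and is lifted to $\F_p^{q^n\times q^n}$ by taking the Kronecker product with all-ones matrices in the remaining $n-1$ coordinates. By subadditivity of rank, the total rank of $L'_n$ is then at most $2rn$. Natural candidates for each $M^{(i)}$ are tuned perturbations of $L$ (for instance, $L$ plus scalar multiples of the all-ones matrix, staying within rank $r+1\le 2r$), giving per-coordinate flexibility to align the Boolean structure of the sum with the product structure of $\kro{A}$.

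For the analysis, the crucial point is that the pairs $(x_i, y_i)$ are mutually independent under the uniform distribution. I would expand $\bool(\cdot)$ via Fourier characters on $\F_p$, writing $\bool(s) = \sum_{k\in\F_p} \hat b(k)\,\omega^{ks}$ for $\omega$ a primitive $p$-th root of unity, so that the target correlation decomposes as the character sum
$$\sum_{k\in\F_p} \hat b(k) \prod_{i=1}^n \E_{(x_i, y_i)}\bk*{\omega^{k\,M^{(i)}(x_i, y_i)}\cdot A(x_i, y_i)},$$
where the product-over-$i$ form follows from independence. The goal is to pick the shifts defining the $M^{(i)}$'s so that the dominant character contributes per-coordinate correlation magnitude at least $\gamma$, yielding $\gamma^n$ after multiplying across the $n$ coordinates.

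The main obstacle is realising this per-coordinate correlation of $\gamma = 1/2 - \alpha - \delta$ using rank-$\le 2r$ matrices derived from $L$, and controlling the subdominant-character terms so they cannot destructively interfere with the main contribution. The three ingredients of $\gamma$ have natural interpretations that match such a decomposition: $\alpha$ is the correlation already attainable from a constant (rank-$1$) predictor because of $A$'s imbalance, $\delta$ is the Boolean error rate of $L$'s approximation to $A$, and $1/2$ arises as a Fourier-normalisation factor. The hypothesis $2\alpha+\delta<1/2$ ensures $\gamma>0$ so that amplification is meaningful, and I expect it is also precisely what is needed to bound the subdominant Fourier contributions in the quantitative analysis.
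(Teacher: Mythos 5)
Your reduction of the statement to a correlation bound, the rank accounting (a sum of $n$ coordinate-lifted matrices of rank $O(r)$ has rank $\le 2rn$), and the use of independence of the pairs $(x_i,y_i)$ are all sound and in the same spirit as the paper. But there is a genuine gap at exactly the point you flag as ``the main obstacle'': you never specify the matrices $M^{(i)}$ nor prove the per-coordinate correlation $\gamma=1/2-\alpha-\delta$, and the specific candidates you name do not work as described. If each $M^{(i)}$ is $L$ plus a scalar multiple $c_i$ of the all-ones matrix, then $L'_n(x,y)=\sum_i L[x_i,y_i]+\sum_i c_i$, so the ``per-coordinate flexibility'' collapses to a single global constant shift; with a fixed deterministic choice the character factors $\E\bk[\big]{\omega^{k\,L[x_i,y_i]}A[x_i,y_i]}$ are complex numbers whose phases compound over the $n$ coordinates and whose contributions across different characters $k$ can interfere destructively, and you give no argument that any choice of shifts keeps the total sum above $\gamma^n$. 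Moreover a ``dominant character plus small error terms'' analysis cannot work in the obvious way: each of the two underlying approximations (the per-coordinate Boolean agreement of $L$ with $A$, and any degree-$1$ surrogate for the $n$-fold product) is individually only $1/2+o(1)$ accurate, so the advantage $\gamma^n$ cannot be extracted by isolating one good term and union-bounding the rest; one must exploit the cancellation effect that when both approximations err at an entry, their composition is correct there.

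The paper resolves precisely this by \emph{randomizing} the coefficients rather than the shifts: it sets $\tilde{L}[x,y]=\tilde{\pi}_a\bk{L[x_1,y_1],\ldots,L[x_n,y_n]}$ with $\tilde{\pi}_a(z_1,\ldots,z_n)=1+\sum_i a_i(z_i-1)$ for a uniformly random seed $a\in\F_p^n$ (a probabilistic degree-$1$ polynomial in the Razborov--Smolensky style). The key dichotomy is that $\tilde{\pi}_a$ equals $1$ with certainty when all inputs are $1$, and is uniform on $\F_p$ otherwise; conditioning on the number of $-1$'s among the $\bool(A[x_i,y_i])$ and on whether all $\bool(L[x_i,y_i])$ equal $1$, the success probability is computed exactly, the terms where both approximations fail contribute positively, and the bound $\frac12+\frac12\bk{\frac12-\alpha-\delta}^n$ falls out; an averaging argument then fixes one good seed $a$, giving rank $nr+1\le 2rn$. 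Your framework could in principle accommodate this (take $M^{(i)}=a_iL$ plus constants), but the random choice of the multipliers $a_i$, the uniformity lemma, and the exact conditional computation replacing your unproved character-sum estimate are the missing substance of the proof.
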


\Cref{thm:amplification_kro} shows that rigidity lower bounds for lower rank parameters can imply rigidity lower bounds for substantially higher rank parameters. To demonstrate, we apply \cref{thm:amplification_kro} to get:

\begin{restatable}{theorem}{StrongerKro}
    \label{thm:stronger_lb_kro_imply_Razborov}
      Let $p$ be a constant prime number, $q$ be a constant integer, and $A \in \BK{-1,1}^{q \times q}$ be a matrix with $\rank(A) > 1$. Suppose we can prove for some constant parameter $\eps > 0 $ that, for infinitely many positive integers $n$,
    \begin{align*}
        \label{ineq:kro_rigidity_lb_for_larger_rank}
        \boolRigidity{\kro{A}}{n^{1 + \eps}} \ge q^{2n} \bk*{\frac{1}{2} - \frac{1}{2^{n/2^{(\log n)^{o(1)}}}}}.\numberthis
    \end{align*}
    Then, the matrix family $\BK{A_n}_{n \in \mathbb{N}}$ is Razborov rigid.
\end{restatable}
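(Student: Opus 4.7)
The plan is to argue by contrapositive, combining Theorem~\ref{thm:amplification_kro} applied to a large Kronecker power of $A$ with the hypothesized lower bound to derive a contradiction. Suppose, for contradiction, that $\{\kro[n]{A}\}_{n\in\N}$ is not Razborov rigid. Then there is a constant $k$ and an arbitrarily small $\delta_0 > 0$ such that, for arbitrarily large $n_0$,
\[
\boolRigidity{\kro[n_0]{A}}{r_0} \le \delta_0 \cdot q^{2 n_0}, \qquad r_0 := 2^{(\log n_0)^k}.
\]
The strategy is to use $A' := \kro[n_0]{A}$, viewed as a $q^{n_0} \times q^{n_0}$ matrix, as the base matrix fed into Theorem~\ref{thm:amplification_kro}, and then compare the resulting amplified upper bound with the hypothesis's lower bound at an appropriately chosen larger $m$.

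First I would verify the balance assumption on $A'$: since $\rank(A) > 1$, $A$ is not monochromatic, so the gap $\gamma := |p_+(A) - p_-(A)|$ between the fractions of $+1$ and $-1$ entries of $A$ is strictly less than $1$. A short calculation (expressing each entry of $\kro[n_0]{A}$ as a product of $n_0$ iid $\pm 1$-valued entries of $A$) gives that the imbalance of $A'$ equals $\gamma^{n_0}$, which decays exponentially in $n_0$. Hence for $n_0$ large enough, $\alpha' := \gamma^{n_0}$ satisfies $2\alpha' + \delta_0 < \tfrac{1}{2}$, and Theorem~\ref{thm:amplification_kro} applies to produce, for each integer $\ell$,
\[
\boolRigidity{\kro[n_0 \ell]{A}}{2 r_0 \ell} \le q^{2 n_0 \ell}\left(\frac{1}{2} - \frac{1}{2}\left(\frac{1}{2} - \alpha' - \delta_0\right)^\ell\right).
\]

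The core of the argument is to choose $\ell$ so that this upper bound contradicts the hypothesized lower bound at rank $m^{1+\eps}$, where $m := n_0 \ell$. I would set $\ell := \lceil 2^{(\log n_0)^k / \eps} \rceil$; a direct exponent comparison gives $2 r_0 \ell \le m^{1+\eps}$, so by monotonicity the upper bound also holds at rank $m^{1+\eps}$. For the size of the subtracted term: since $\alpha' + \delta_0 \le 1/4$ for large $n_0$, the amplification's gap from $\tfrac{1}{2}$ is at least $2^{-2\ell-1}$, while the hypothesis's gap is $2^{-m/2^{(\log m)^{o(1)}}}$. Since $\log m \sim (\log n_0)^k / \eps$ gives $n_0 \sim 2^{(\log m)^{1/k}}$, which dominates $2^{(\log m)^{o(1)}}$ for any fixed $k$, we obtain $m/2^{(\log m)^{o(1)}} = n_0 \ell / 2^{(\log m)^{o(1)}} \gg \ell$, so the amplification's gap strictly exceeds the hypothesis's, giving the contradiction.

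The main obstacle I anticipate is matching the constructed values $m = n_0 \ell$ to the infinitely many $n$ on which the hypothesis asserts a lower bound. With $n_0$ ranging over an infinite set and $\ell$ free to vary above the exponent-comparison threshold, we generate a dense infinite family of valid $m$'s, so intersection with the hypothesis's infinite set should be automatic; if needed, a brief padding argument via a Kronecker-tensor rigidity inequality applied to $\kro[m]{A} = \kro[n_0 \lfloor m/n_0 \rfloor]{A} \otimes \kro[r]{A}$ with $r = m \bmod n_0$ handles nearby $m$'s. Minor routine bookkeeping, such as verifying $r_0 \le q^{n_0}$ to satisfy the $r \le q$ hypothesis of Theorem~\ref{thm:amplification_kro}, is trivial since $r_0$ is quasi-polylogarithmic in $n_0$ while $q^{n_0}$ is exponential.
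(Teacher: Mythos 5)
Your core amplification step and parameter tuning are sound and essentially match the paper: take a large Kronecker power as the base matrix, check its imbalance decays like $\gamma^{n_0}$, apply Theorem~\ref{thm:amplification_kro}, and verify that the rank $2r_0\ell$ stays below $m^{1+\eps}$ while the error gap $\tfrac12\bk{\tfrac12-\alpha'-\delta_0}^{\ell}\ge 2^{-2\ell-1}$ beats $2^{-m/2^{(\log m)^{o(1)}}}$ because $n_0\gg 2^{(\log m)^{o(1)}}$. However, the step you flag as an "obstacle" is a genuine gap, and your dismissal of it is wrong. Your quantifiers run in the opposite direction from what the contradiction needs: you fix $n_0$ from the non-rigidity assumption and build $m=n_0\ell$, but the hypothesized lower bound \eqref{ineq:kro_rigidity_lb_for_larger_rank} holds only on some unspecified infinite set of $n$'s, and nothing forces that set to contain any of your constructed $m$'s. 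The intersection is not "automatic": for fixed $n_0$ the admissible $\ell$ lie in a bounded window (once $\ell$ is too large, $2^{(\log(n_0\ell))^{o(1)}}$ exceeds $n_0$ and the error comparison fails), so your valid $m$'s form a sparse union of arithmetic-progression segments with gaps of size $n_0\to\infty$, and two infinite subsets of $\N$ need not meet. The padding fallback is also not routine: tensoring your approximation of $\kro[n_0\lfloor m/n_0\rfloor]{A}$ with a trivial (e.g.\ all-ones) approximation of the leftover factor $\kro[r]{A}$ can destroy the advantage entirely when that factor is balanced, so one must compose with a nontrivial approximation of the remainder block and redo the parity-of-errors calculation (or prove a mixed-base version of Theorem~\ref{thm:amplification_kro}); and even granting such a lemma, under your stated weak negation ("for arbitrarily large $n_0$") you would still need an $n_0$ from the non-rigidity set lying in a window determined by each hypothesis $n$, which again is not guaranteed.

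The paper sidesteps all of this by ordering the quantifiers the other way: it reads non-Razborov-rigidity as giving, for some constant $c$, the upper bound $\boolRigidity{\kro[k]{A}}{2^{\log^c k}}\le \tfrac13 q^{2k}$ at \emph{every} size $k$, then, for each $n$ in the hypothesis's infinite set, chooses the block size $k(n)=2^{(\eps\log n/2)^{1/c}}$ as a function of that $n$ and amplifies with $n/k$ blocks, so the contradiction lands exactly on an $n$ where \eqref{ineq:kro_rigidity_lb_for_larger_rank} is assumed. To repair your write-up you should adopt that orientation (decompose each hypothesis $n$, rather than synthesize $m$ from $n_0$), at which point only the divisibility of $n$ by the block size remains to be handled — a wrinkle the paper itself glosses over, and where a carefully stated padding or mixed-base amplification lemma of the kind you sketch would indeed be the right tool.
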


In particular, \cref{thm:stronger_lb_kro_imply_Razborov} implies that if we can improve our rigidity lower bound $\boolRigidity{H_n}{\Omega(n)} \ge 4^n \bk*{1/2 - 2^{-O(n)}}$ for the Walsh-Hadamard matrix $H_n$ to work for a slightly larger rank $n^{1 + \eps}$ and a slightly higher error rate $4^n \bk*{1/2 - 2^{-n/2^{\log^{o(1)} n}}}$, it would prove that Walsh-Hadamard matrices are Razborov rigid. This would give a breakthrough lower bound for the communication complexity class $\textsf{PH}^{cc}$.

We emphasize that the higher error rate is \emph{easier} to achieve, but in exchange, the higher rank is \emph{harder} to achieve. More generally, there is a trade-off between the required error and rank parameters one must achieve, depending on how large $n$ is compared to $q$ when one applies \Cref{thm:amplification_kro}. Our \cref{thm:stronger_lb_kro_imply_Razborov} states one example of the trade-off.

This may help to explain why our best rigidity lower bounds for explicit matrices $A$ are stuck at about $\R_A(r) \geq \Omega\bk*{\frac{N^2}{r} \log \frac{N}{r}}$. Currently, for the rank regime $r=2^{\log^{\omega(1)} n}$ needed for Razborov rigidity, there is a sizeable gap between the lower bound $\Omega(N^2 / 2^{\log^{\omega(1)} n})$ we can prove for explicit matrices, and the required lower bound of $\Omega(N^2)$. However, \Cref{thm:stronger_lb_kro_imply_Razborov} gives a setting where there is a small gap between the two: one must only prove a rigidity lower bound for a slightly larger rank as Corollary~\ref{cor:walsh_hadamard_lb}, and it need not even have as low an error parameter. In particular, a large improvement to the best-known rigidity lower bound across all rank regimes may be difficult, but it might be more fruitful to aim for larger improvements focused on specific rank regimes, as we do here in the low-rank regime.

Our second, similar hardness amplification result works for the distance matrix as well as a more general class obtained by taking the Majority power of a small matrix. For technical reasons, our amplification argument needs to start with a matrix with small \emph{probabilistic Boolean rank}, which is 
the ``{worst-case probabilistic}'' version of Boolean rigidity: 
Rather than aiming for a single low-rank matrix which differs from matrix $A$ on at most $\delta$ fraction of entries, we need a \emph{distribution} on low-rank matrices such that each entry of $A$ is correct with error at most $\delta$.
Note that any Boolean function within the communication complexity class $\textsf{PH}^{cc}$ also admits a communication matrix with low probabilistic Boolean rank~\cite{razborov1989rigid,wunderlich2012theorem,alman2017probabilistic,alman2022efficient}, so this suffices for proving that rigidity lower bounds in the low-rank regime would imply a breakthrough communication lower bound:
\begin{restatable}{theorem}{MajorityAmp}
    \label{thm:amplification_maj}
    Let $A \in \BK{-1,1}^{q \times q}$ be a matrix with exactly half its entries equal to $1$. Let $k < n$ be parameters. Suppose the probabilistic Boolean rank of $\maj[k]{A}$ for error rate $\delta \in \bk{0, 1/2}$ is  $r$. Then, the Boolean rigidity of $\maj{A}$ can be bounded as 
    \begin{align*}
      \boolRigidity{\maj{A}}{r} \le q^{2n} \bk*{\frac{1}{2} - \Omega\bk*{(1 - 2\delta) \cdot \frac{\sqrt{k}}{\sqrt{n}}}},
    \end{align*}
  where the $\Omega$ is hiding an absolute constant.
  \end{restatable}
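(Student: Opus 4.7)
The plan is to construct a rank-$r$ approximation of $\maj{A}$ by randomly sub-sampling $k$ of the $n$ coordinates and applying the given probabilistic low-rank approximation of $\maj[k]{A}$. Let $\mathcal{D}$ be the distribution over rank-$r$ matrices $L \in \F_p^{q^k \times q^k}$ guaranteed by the hypothesis, so that for every entry $(u, w)$, $\Pr_{L \sim \mathcal{D}}[\bool(L[u, w]) = \maj[k]{A}[u, w]] \geq 1 - \delta$. For any $k$-subset $S \subseteq [n]$, define the lifted matrix $L_S \in \F_p^{q^n \times q^n}$ by $L_S[x, y] \defeq L[x|_S, y|_S]$; since $L_S$ depends only on the restricted indices, it has rank at most $r$. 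I will sample $S$ uniformly from $k$-subsets of $[n]$ and $L \sim \mathcal{D}$ independently, bound the expected number of entries where $L_S$ agrees with $\maj{A}$ in the Boolean sense, and extract a deterministic construction by averaging.

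Fix $(x, y) \in [q]^n \times [q]^n$ and set $v_i \defeq A[x_i, y_i]$, so $\maj{A}[x, y] = \sign(T + U)$ and $\maj[k]{A}[x|_S, y|_S] = \sign(T)$ where $T \defeq \sum_{i \in S} v_i$ and $U \defeq \sum_{i \notin S} v_i$. Conditioning on $S$ and $(x, y)$, the Boolean entry $\bool(L_S[x, y])$ equals $\sign(T)$ with some probability $q_{S, x, y} \geq 1 - \delta$ over $L \sim \mathcal{D}$, so $\E_L[\bool(L_S[x, y]) \mid S, x, y] = (2 q_{S, x, y} - 1) \sign(T)$ with $2 q_{S, x, y} - 1 \geq 1 - 2\delta \geq 0$. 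Taking expectation over all randomness,
\begin{align*}
\E\bigl[\bool(L_S[x, y]) \cdot \maj{A}[x, y]\bigr] = \E_{S, x, y}\bigl[(2 q_{S, x, y} - 1) \sign(T) \sign(T + U)\bigr].
\end{align*}
The key positivity observation is that, because $A$ has exactly half its entries equal to $+1$ and $(x, y)$ is uniform, $v$ is uniform on $\{-1, +1\}^n$, so $U$ is a symmetric Rademacher sum independent of $T$; hence $\sign(T) \cdot \E_U[\sign(T + U) \mid T] \geq 0$ pointwise. Because both $2 q_{S, x, y} - 1 \geq 1 - 2\delta$ and $\sign(T) \cdot \E_U[\sign(T + U) \mid T] \geq 0$ are non-negative, the $(1 - 2\delta)$ factor pulls out cleanly:
\begin{align*}
\E\bigl[\bool(L_S[x, y]) \cdot \maj{A}[x, y]\bigr] \geq (1 - 2\delta) \cdot \E_{v, S}\bigl[\sign(T) \sign(T + U)\bigr].
\end{align*}

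The remaining step is showing $\E_{v, S}[\sign(T) \sign(T + U)] = \Omega(\sqrt{k / n})$. Viewing $T$ and $T + U$ as Rademacher sums of sizes $k$ and $n$, after normalization they are jointly approximately Gaussian with correlation $\sqrt{k / n}$, and Grothendieck's identity gives $\E[\sign(T) \sign(T + U)] \to \tfrac{2}{\pi} \arcsin(\sqrt{k / n}) = \Theta(\sqrt{k / n})$, with the discrete-to-Gaussian gap controlled by Berry--Esseen. Alternatively, one can condition on $T$ and directly bound $|\E_U[\sign(T + U) \mid T]|$ by anti-concentration of the Rademacher sum $U$, and then take expectation over $T$ using $\E|T| = \Theta(\sqrt{k})$. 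Combining these bounds, the expected Boolean agreement probability is at least $\tfrac12 + \Omega((1 - 2\delta) \sqrt{k / n})$, so by averaging there exist $S^\star$ and $L^\star$ whose lift $L^\star_{S^\star}$ has rank at most $r$ and matches $\maj{A}$ on at least $q^{2n}\bigl(\tfrac12 + \Omega((1 - 2\delta) \sqrt{k / n})\bigr)$ entries, yielding the claimed Boolean rigidity upper bound.

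The main subtlety I expect to wrestle with is the clean factoring of $(1 - 2\delta)$: naively combining the probabilistic rank error with the sub-majority mismatch easily yields only the weaker factor $1 - \delta$ or worse, so one must carefully exploit the symmetry of $U$ (hence the positivity of $\sign(T) \cdot \E_U[\sign(T + U) \mid T]$) to decouple the two losses. A secondary, more routine concern is handling tie-breaking in Majority and verifying the $\Omega(\sqrt{k / n})$ correlation bound uniformly in $k$ (including small $k$, where the Gaussian approximation is weakest), which likely calls for a direct discrete calculation rather than the Gaussian limit.
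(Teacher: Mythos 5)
Your proposal is correct and follows essentially the same route as the paper: restricting the low-rank approximation of $\maj[k]{A}$ to $k$ of the $n$ coordinates (the paper simply fixes the prefix, which suffices by exchangeability), using the conditional symmetry of the unused coordinates to decouple the probabilistic-rank error from the majority-agreement correlation --- your positivity of $\sign(T)\,\E_U[\sign(T+U)\mid T]$ is exactly the paper's observation that the conditional agreement probability $p^*\ge 1/2$, which lets it assume $\Delta$ independent with error exactly $\delta$ and extract the $(1-2\delta)$ factor --- and then bounding the agreement correlation by $\Omega(\sqrt{k/n})$ via anti-concentration of Rademacher sums (the paper conditions on $A_1+\cdots+A_k\ge\eps\sqrt{k}$, matching your second, direct route rather than the Gaussian/Berry--Esseen one). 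The only piece left at sketch level is that $\Omega(\sqrt{k/n})$ estimate, but the method you name is the one the paper uses, so there is no substantive gap.
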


Similar to above, we can apply \Cref{thm:amplification_maj} to get:

\begin{restatable}{theorem}{StrongerMaj}
    \label{thm:stronger_lb_maj_imply_Razborov}
    Let $p$ be a constant prime number, and $\BK{M_n}_{n\in \mathbb{N}}$ be the family of the distance matrices. Suppose we can prove for some constant $\beta > 0$ that, for infinitely many positive integers $n$, 
    \begin{align*}
        \label{ineq:maj_rigidity_lb_for_larger_rank}
        \boolRigidity{M_n}{\beta \log n} \ge 4^n \bk*{\frac{1}{2} - \frac{2^{\bk*{\log \log n}^{o(1)}}}{n^{1/2}}}. \numberthis
    \end{align*}   
    Then, the matrix family $\BK{M_n}_{n \in \mathbb{N}}$ is Razborov rigid.
\end{restatable}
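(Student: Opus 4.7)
The plan is to derive the Razborov rigidity of $\{M_k\}$ from (\ref{ineq:maj_rigidity_lb_for_larger_rank}) by applying the contrapositive of \Cref{thm:amplification_maj} together with the shift-invariance of the distance matrices.

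I would first record an equivalence. Because $M_k[x,y]$ depends only on $x \oplus y$, the matrix $M_k$ is invariant under the group action $(x,y) \mapsto (x \oplus s, y \oplus s)$ for $s \in \{0,1\}^k$. Given any rank-$r$ matrix $L$ within Boolean Hamming distance $t$ of $M_k$, averaging $L$ over this group produces a distribution on rank-$r$ matrices whose per-entry Boolean error is at most $t/4^k$; conversely, the best matrix in the support of any such distribution achieves at most the expected number of Boolean errors. Hence for shift-invariant $M_k$:
\[
\boolRigidity{M_k}{r} > \delta \cdot 4^k \iff \text{the probabilistic Boolean rank of } M_k \text{ at error } \delta \text{ exceeds } r.
\]

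Next, let $\alpha(n) = o(1)$ be the specific function witnessing (\ref{ineq:maj_rigidity_lb_for_larger_rank}). Invoking the contrapositive of \Cref{thm:amplification_maj} at $r = \beta \log n$ yields: for any $k < n$ and $\delta \in (0, 1/2)$ satisfying $(1 - 2\delta)\sqrt{k} > C \cdot 2^{(\log\log n)^{\alpha(n)}}$ (where $C$ is the absolute constant hidden in the $\Omega$ of \Cref{thm:amplification_maj}), the probabilistic Boolean rank of $M_k$ at error $\delta$ must exceed $\beta \log n$. I would take $\delta = 1/4$ and $k = k(n) \defeq \lceil (4C)^2 \cdot 4^{(\log\log n)^{\alpha(n)}} \rceil$, which satisfies the condition, and then apply the equivalence above to obtain $\boolRigidity{M_k}{\beta \log n} > (1/4) \cdot 4^k = \Omega(4^k)$. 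Writing $\beta \log n = 2^{g(k)}$ from the definition of $k(n)$ gives $g(k) \sim (\log k)^{1/\alpha(n)}$; along any infinite sequence of $n$ witnessing (\ref{ineq:maj_rigidity_lb_for_larger_rank}), $1/\alpha(n) \to \infty$, so $g(k) = \log^{\omega(1)} k$ and the bound sits in the Razborov rank regime.

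The main obstacle is ensuring that $k(n) \to \infty$ along the sequence, i.e., that $(\log\log n)^{\alpha(n)} \to \infty$, so that the conclusion applies to infinitely many distinct $k$. This is forced by the lower bound itself: the forward direction of \Cref{thm:amplification_maj} applied to the trivial upper bound stating that $M_j$ has probabilistic Boolean rank at most $2^j$ at error $0$ for $j \approx \log\log n$ gives $\boolRigidity{M_n}{\beta \log n} \leq 4^n(1/2 - \Omega(\sqrt{\log\log n/n}))$; combining with (\ref{ineq:maj_rigidity_lb_for_larger_rank}) forces $\alpha(n) \gtrsim \log\log\log\log n / \log\log\log n$, and hence $(\log\log n)^{\alpha(n)} \gtrsim \log\log\log n \to \infty$. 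Once this is in place, $\boolRigidity{M_k}{2^{\log^{\omega(1)} k}} \geq \Omega(4^k)$ for the infinite sequence $\{k(n)\}$, giving Razborov rigidity of $\{M_n\}_{n \in \mathbb{N}}$.
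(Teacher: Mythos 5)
Your overall route is the contrapositive of the paper's argument: the paper assumes $\BK{M_n}_{n}$ is not Razborov rigid, so that $\boolRigidity{M_k}{2^{\log^c k}} \le \tfrac13 \cdot 4^k$, feeds this into \Cref{thm:amplification_maj} with $k = 2^{(\log\log n + \log\beta)^{1/c}}$ (so that $2^{\log^c k} = \beta\log n$), and contradicts \eqref{ineq:maj_rigidity_lb_for_larger_rank}; your parameter choices ($\delta$ a fixed constant, $k \approx 4^{(\log\log n)^{\alpha(n)}}$) implement the same trade-off run in the forward direction. The step that does not hold up is the ``equivalence'' you assert between Boolean rigidity and probabilistic Boolean rank for the shift-invariant matrix $M_k$, and it is exactly the step your chain needs: to pass from ``the probabilistic Boolean rank of $M_k$ at error $1/4$ exceeds $\beta\log n$'' to ``$\boolRigidity{M_k}{\beta\log n} > \tfrac14 \cdot 4^k$'' you must show that a Boolean rigidity upper bound implies a probabilistic Boolean rank upper bound at a comparable error rate.

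That implication does not follow from your symmetrization argument. Averaging a rank-$r$ matrix $L$ over the action $(x,y)\mapsto(x\oplus s, y\oplus s)$ does not give per-entry error $t/4^k$: the orbit of an entry under this action has size $2^k$, not $4^k$, so the per-entry error of the symmetrized distribution at $(x,y)$ equals the error density of $L$ on the orbit of $(x,y)$, which can be as large as $\min(1, t/2^k)$. For instance, if all of $L$'s disagreements sit on the diagonal orbit $\{(x,x)\}$, then $t = 2^k \ll \delta\cdot 4^k$ and yet every diagonal entry has error $1$ after symmetrization; enlarging the group by coordinate permutations does not help, since the weight-$0$ and weight-$k$ orbits still have size only $2^k$, while probabilistic Boolean rank demands error at most $\delta$ at \emph{every} entry. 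This worst-case-to-per-entry issue is precisely why \Cref{thm:amplification_maj} is stated with probabilistic Boolean rank rather than plain Boolean rigidity (see the discussion in the technical overview: ``the error can be distributed arbitrarily''). Without the equivalence, your argument only delivers a probabilistic Boolean rank lower bound for $M_k$, not the $\Omega(4^k)$ rigidity lower bound required by the definition of Razborov rigidity. (The paper's own write-up likewise elides how the non-Razborov-rigidity assumption supplies the probabilistic-rank hypothesis of \Cref{thm:amplification_maj}, but it does not rest on the incorrect per-entry bound you invoke.) As a minor point, your worry about $k(n)\to\infty$ is a non-issue: the hypothesis \eqref{ineq:maj_rigidity_lb_for_larger_rank} is monotone in the error term, so one may always replace the witnessing $o(1)$ exponent by a larger one; your workaround via the trivial rank bound for $M_{\approx\log\log n}$ is correct but unnecessary.
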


This again shows that a slight improvement to our rigidity lower bound in \Cref{thm:hamming_lb} would yield a breakthrough communication lower bound. Notably, both the rigidity lower bound of \Cref{thm:hamming_lb}, and the required lower bound of \Cref{thm:stronger_lb_maj_imply_Razborov}, focus on the same, lower error parameter (of about $1/2 - 1/\sqrt{n}$) compared to \Cref{thm:stronger_lb_kro_imply_Razborov} (which focuses on about $1/2 - 1/2^{\Theta(n)}$).

\subsection{Other Related Works}

Matrix rigidity has been studied extensively since its introduction by Valiant~\cite{valiant1977graphtheoretic}. We refer the reader to surveys by Lokam~\cite{lokam2009complexity}, Ramya~\cite{ramya2020recent}, and Golovnev~\cite{golovnev2020matrix} about known upper and lower bounds on rigidity, and connections with other areas of computer science. We briefly expand below on a few of the works most related to ours.

\paragraph{Rigidity upper bounds.}
Alman and Williams~\cite{alman2017probabilistic} showed that the Walsh-Hadamard transform is not Valiant rigid. Letting $N = 2^n$ denote the side-length of $H_n$, they showed that for every $\varepsilon>0$, $\R_{H_n}(N^{1 - \Theta(\varepsilon^2 / \log ( 1/\varepsilon))}) \leq N^{1 + \varepsilon}.$
They also showed an upper bound for lower rank but higher sparsity: for every parameter $s$, they showed
$\R_{H_n}((n/\log s)^{O(\sqrt{n \log s})}) \leq N^2 / s.$
These two upper bounds hold over any field. Later work by Alman~\cite{alman2021kronecker} showed that they also hold for $M_n$, and more generally any Kronecker power or Majority power matrix family.
A folklore upper bound uses the singular values of $H_n$ to prove that $H_n$ is not very rigid for very high rank: $\R^{\mathbb{C}}_{H_n}(N/2) \leq N.$
Finally, Alman, Guan and Padaki~\cite{alman2023smaller} gave an upper bound on the rank-1 rigidity of $H_n$ as part of their construction of low-depth arithmetic circuits:
$\R_{H_n}(1) \leq N^2 \cdot \left( \frac12 - \frac{\Theta(1)}{\sqrt{N}} \right).$

\paragraph{Semi-explicit rigid matrices.}

A recent line of work~\cite{alman2022efficient,bhangale2024rigid,chen2020almost,chen2021inverse,huang2021averagecase}, gave a ``somewhat-explicit'' construction of a matrix family which is Razborov rigid. They proved that there is a $\mathsf{P^{NP}}$ machine, which, on input $1^n$, outputs an $n \times n$ matrix $A_n$ over $\mathbb{F}_2$ such that 
$\R_{A_n}(2^{\Omega(\log N / \log \log N)}) \geq N^2 \cdot \left(  \frac12 - o(1) \right).$ This is the only other line of work we're aware of that gives rigidity lower bounds with maximal sparsity parameters $N^2 \cdot \left(  \frac12 - o(1) \right)$ for non-random matrices.

\section{Technical Overview}
\label{sec:tec_overview}

\subsection{Rigidity Lower Bound via Spectral Method}

Our lower bound is built on the spectral method that was previously used to bound the bounded rigidity $\R_A^{\C}(r, \theta)$. To explain the high-level idea of this method, suppose the matrix $A$ is approximated by a low-rank matrix $L$ in the bounded sense; that is, $L$ differs from $A$ in at most $s$ entries, with the magnitude of the difference in each differing entry pair bounded by $\theta$. Then, the Hamming distance $s$ between $L$ and $A$ can be lower bounded by the Frobenius distance as $\norm{A -L}_F^2 \le \theta^2 s$, which can be further analyzed using spectral tools.

There are three technical challenges of using the spectral method to prove a nearly tight regular or Boolean rigidity lower bound $\boolRigidity{A}{r} \ge N^2 \cdot (1/2 - o(1))$ over a finite field $\F_p$:
\begin{itemize}
  \item First, the Frobenius distance between matrices is defined only over the field $\C$.
  \item Second, to make the Frobenius distance upper bounded by the Hamming distance, we crucially rely on the fact that the difference between $A$ and $L$ is entry-wisely bounded by $\theta$ in magnitude. In particular, since $\norm{A -L}_F^2$ scales with both $s$ and $\theta$, even small constant factor changes in $\theta$ can make it difficult to achieve the leading constant of $1/2$ in the sparsity $s$ that we aim to achieve in our lower bound.
  In particular, to get the exact coefficient $1/2$ in prior work, we require $\theta$ to be at most $2$, which is a strong requirement.
  \item Finally, to get a lower bound on the Frobenius distance between $A$ and a rank-$r$ matrix $L$, prior work needs to use that the low-rank decomposition $U, V$ of $L$, i.e., the two $r \times n$ matrices such that $L = U^{\top}V$, have each entry bounded as well. This does not immediately follow, even from bounded rigidity upper bounds, but this additional property was obtained via a tool from convex geometry called John's Theorem \cite{john2014extremum} in the previous state-of-the-art work \cite{rashtchian2016bounded}. (We will ultimately replace this with a \emph{simpler} bound in the finite field setting.)
\end{itemize}

In this work, we show, surprisingly, that we can address these three challenges at the same time by a new, powerful generic transformation from a low-rank approximation over $\F_p$ to a low-rank approximation over $\C$. Specifically, for a matrix $A \in \BK{-1,1}^{N \times N}$, if it can be approximated by a rank-$r$ matrix $L$ over $\F_p$, then we construct another matrix $\tilde{L} \in \BK{-1,1}^{N \times N}$ with rank $\tilde{r}$ over $\C$ that approximates $A$ within the same accuracy as $L$.

As each entry of $\tilde{L}$ is either $1$ or $-1$, $\tilde{L}$ approximates $A$ in the bounded sense with $\theta = 2$, which will allow us to make use of some tools from the spectral approach of prior work. Moreover, we prove that the matrix $\tilde{L}$ admits a natural low-rank decomposition $U, V \in \C^{r\times n}$ such that each entry of them is bounded by $2^{O(r)}$. Defining $\tilde{L}$ and proving this requires a careful polynomial interpolation in conjunction with other algebraic manipulations. With all these properties, we can apply a variant on the spectral method on $\tilde{L}$ to lower bound the error by $N^2 (1/2 - o(1))$.

\subsection{Hardness Amplifications for Kronecker Product}
\label{subsec:overview_amplification_kro}
In this subsection, we will overview our techniques for hardness amplification. 
Suppose we are given a rank-$r$ matrix $L$ that approximates the matrix $A \in \BK{-1, 1}^{q \times q}$ within a constant error rate $\delta$. Our goal is to construct a matrix $\tilde{L} \in \F_p^{q^n \times q^n}$ from $L$ without a large blow up in the rank, such that $\tilde{L}$ approximates $\kro{A}$ with an accuracy $(1/2 + 2^{-O(n)})$, meaning that $\tilde{L}$ agree with $\kro{A}$ on $(1/2 + 2^{-O(n)})$-fraction of entries.

The most natural way for constructing $\tilde{L}$ is to consider $\kro{L}$. As $L$ is a low-rank approximation of ${A}$, we can think of $\kro{L}$ as approximating $\kro{A}$ within certain accuracy. Following some careful calculation, one can check $\kro{L}$ does approximate $\kro{A}$ with an accuracy $(1/2 + 2^{-O(n)})$. The only issue is that $\kro{L}$ has a much higher rank $r^n$ than what we want.

In this paper, we construct a matrix $\tilde{L}$ with a much lower rank $2nr$, without harming the asymptotic accuracy $(1/2 + 2^{-O(n)})$.
To define the low-rank approximation $\tilde{L}$ of $\kro{A}$, we consider $\kro{L}$ as an intermediate matrix and intuitively think of giving the approximation in two steps:
\begin{itemize}
    \item $\kro{L}$ approximates $\kro{A}$ with an accuracy $(1/2 + 2^{-O(n)})$, although the rank of $\kro{L}$ is high.
    \item Let $\pi: \F_p^n \to \F_p$ be the multiplication function, i.e., $\pi(z_1, \ldots, z_n) \defeq z_1 \cdots z_n$. Then, each entry of $\kro{L}$ can be represented as
        $\kro{L}[x,y] = \pi\bk*{L[x_1, y_1], \ldots, L[x_n,y_n]}.$
     We can further use a degree-$1$ polynomial $\tilde{\pi}$ to approximate the high-degree function $\pi$. Inspired by the fact that when $z_1, \ldots, z_n \in \BK{-1,1}$, the function $\pi(z_1, \ldots, z_n)$ counts the parity of the number of $-1$'s among its inputs, we construct $\tilde{\pi}$ using a known construction of probabilistic polynomials for the PARITY function~\cite{razborov1987lower, smolensky1987algebraic}. Using it, we define a low-rank matrix $\tilde{L}$ by 
    $
        \tilde{L}[x,y] = \tilde{\pi}\bk*{L[x_1, y_1], \ldots, L[x_n,y_n]}.
    $
    This low-rank matrix $\tilde{L}$ approximates $\kro{L}$ within a certain accuracy.
\end{itemize}
Now as $\kro{L}$ approximates $\kro{A}$, and $\tilde{L}$ further approximates $\kro{L}$ in some sense, it would be natural to use a union bound to conclude that $\tilde{L}$ approximates $\kro{A}$ well. However, both parts of the approximations are actually very weak: for example, for the second part of the approximation, the best degree-1-polynomial approximation $\tilde{\pi}$ for $\pi$ can only achieve an accuracy of $(1/2 + o(1))$, meaning that $\tilde{L}$ only agrees with $\kro{L}$ on a $(1/2 + o(1))$-fraction of entries. We cannot use a union bound to compose such two weak approximations. 
Fortunately, we can address this issue by carefully considering these two approximations in terms of \emph{Boolean rigidity}: When both the approximations fail at some entry $(x,y)$, their composition yields a correct approximation at $(x,y)$, i.e., $\bool(\tilde{L}[x,y]) = -\bool(\kro{L}[x,y])= \bool (\kro{A}[x,y])$. A careful calculation will show that $\bool(\tilde{L})$ approximates $\bool(\kro{A})$ within accuracy $(1/2 + 2^{-O(n)})$, relying on the fact that the errors in these two approximations are essentially independent.

\subsection{Hardness Amplification for Majority Product}
The Kronecker product has the property that $\kro{A} = \kro[n/k]{(\kro[k]{A})}$, and we implicitly made use of this above to reduce from $\kro{A}$ to $\kro[k]{A}$. Here, we also need to reduce from $\maj{A}$ to $\maj[k]{A}$, but unfortunately, in general, $\maj[n/k]{(\maj[k]{A})} \neq \maj{A}$. We begin by calculating the number of entries in which these differ, which contributes to our final error.

Suppose $A \in \BK{-1, 1}^{q \times q}$ and we are given, via a rigidity upper bound, a rank-$r$ matrix $L$ that approximates the matrix $\maj[k]{A} \in \BK{-1, 1}^{q^k \times q^k}$ within a constant error rate $\delta$. Our goal is to construct a matrix $\tilde{L}$ with a similar rank as $L$ that approximates $\maj{A}$ with an accuracy $(1/2 + \Omega\bk{\sqrt{k/n}})$. 

We follow the same framework of constructing $\tilde{L}$ as for the Kronecker powers: 
\begin{itemize}
  \item As $L$ approximates $\maj[k]{A}$ well, and $\maj{A}$ is similar to the $(n/k)$-th majority power of $\maj[k]{A}$ (as calculated earlier), we use the relatively high-rank matrix $\maj[(n/k)]{L}$ to approximate $\maj{A}$. 
  \item Then, we use a degree-$1$ polynomial $\tilde\Maj$ to approximate the majority function $\Maj\bk{z_1, \ldots, z_{n/k}}$ and define the low-rank matrix $\tilde{L}$ that approximates $\maj[(n/k)]{L}$ correspondingly. In the case of majority product, the degree-$1$ polynomial can be simply chosen as $\tilde{\Maj}(z_1, \ldots, z_{n/k}) = z_1$. One can calculate that this simple dictator function will approximate the majority function with an accuracy $(1/2 + \Omega\bk{\sqrt{k/n}})$.
\end{itemize}

Similar to before, we may naturally aim to use Boolean rigidity to compose these two approximations. However, this is unfortunately impossible: the composed approximation, that uses $\tilde{L}$ to approximate $\maj{A}$, is not guaranteed to be have an accuracy $(1/2 + \Omega\bk{\sqrt{k/n}})$.
To see why this is the case, we consider another intermediate matrix $\tilde{A} \in \BK{-1,1}^{q^n \times q^n}$, which is the analogue of $\tilde{L}$ but with $A$ in place with $L$ in its definition. Specifically, 
\begin{align*}
  \tilde{A}[x,y] \defeq \maj[k]{A}\bk{x_{[1,k]}, y_{[1,k]}} = \Maj\bk{A[x_1, y_1], \ldots, A[x_k, y_k]}, \quad \forall x, y \in [q]^n,
\end{align*}
where $x_{[1,k]}$ represents a prefix of length $k$ in $x = (x_1, \ldots, x_n)$.
The errors in the whole approximation can be classified into two types: the inherent error incurred by the given low-rank matrix $L$ (i.e., the error between $\tilde{L}$ and $\tilde{A}$), and the approximation error (i.e., the error between $\tilde{A}$ and $\maj{A}$). 
The previous composition argument for the Kronecker product crucially relies on the fact that the two types of errors are independent, which is not true for the majority product, as the matrix $L$ only approximates $\maj[k]{A}$ in a coarse sense. For example, if we consider indices $(x,y) \in [q]^n \times [q]^n$ sampled uniformly at random, then:
\begin{itemize}
  \item The random variable $\indicator{\tilde{A}[x,y] = \maj{A}[x,y]}$ is correlated with the random variable $S \defeq A[x_1, y_1] + \cdots + A[x_k, y_k]$. When $S > 0$ (meaning that $\tilde{A}[x,y] = 1$), the larger $S$ is, the more likely that $\maj{A}[x,y] = 1$.
  \item The random variable $\indicator{\tilde{A}[x,y] = \tilde{L}[x,y]} = \indicator{\maj[k]{A}[x_{[1,k]}, y_{[1,k]}] = L[x_{[1,k]}, y_{[1,k]}]}$ can also possibly be correlated with the random variable $S$. This is because the given matrix $L$ is only guaranteed to approximate $\maj[k]{A}$ \emph{on average}, so the error can be distributed arbitrarily.
\end{itemize}
Thus, the two errors given by $\indicator{\tilde{A}[x,y] \neq \maj{A}[x,y]}$ and $\indicator{\tilde{A}[x,y] \neq \tilde{L}[x,y]}$ can be correlated, making it hard to compose them. To address this issue, we need to assume the stronger condition that the \emph{probabilistic Boolean rank} with error rate $\delta$ of $\maj[k]{A}$ is at most $r$. This means the errors in $L$ are now uniformly distributed, allowing the two errors mentioned above to be independent.

\section{Preliminaries}
\label{sec:preliminaries}

\subsection{Matrix Rigidity}

\begin{definition}[Matrix rigidity]
    The \defn{rigidity} of a matrix $A \in \F^{N \times N}$ of rank $r$ over a field $\F$, denoted by $\R_{A}^{\F}(r)$, is defined as the minimum $s$, such that there is a rank-$r$ matrix $L \in \F^{N \times N}$ that agree with $A$ in at least $N^2 - s$ entries, i.e., there is subset of indices $I \subset [N]^2$ with $|I| \ge N^2 - s$, satisfying
    \begin{align*}
        A[i,j] = L[i,j], \quad \forall (i,j) \in I.
    \end{align*}
\end{definition}

\begin{definition}[Razborov rigidity]
    We say a matrix family $\BK{A_n}_{n \in \mathbb{N}}$ is \defn{Razborov rigid} over some field $\F$, if there is a constant $\delta > 0$, such that for any $2^n \times 2^n$ matrix $A_n$, we have \[\R_{A_n}^{\F} \bk*{2^{\log^{\omega(1)} n}} \ge \delta \cdot 4^n.\]
\end{definition}

\subsection{Rigidity Variants}

\begin{definition}[Boolean rigidity]
    For any $x \in \F_p$, we define its \defn{Booleanization} as 
    \begin{align*}
        \bool(x) \defeq \begin{cases}
            1 & \text{if } x \equiv 1 \pmod p,\\
            -1 & \text{otherwise.}
        \end{cases}
    \end{align*}
    The \defn{Boolean rigidity} of a matrix $A \in \F_p^{N \times N}$ of rank $r$ over $\F_p$, denoted by $\boolRigidity{A}{r}$ is defined as the minimum $s$, such that there is a matrix $L \in \F_p^{N \times N}$ with $\rank(L) \le r$ and a subset of indices $I \subset [N]^2$ with $|I| \ge N^2 - s$, satisfying
    \begin{align*}
        \bool(A[i,j]) = \bool(L[i,j]), \quad \forall (i,j) \in I.
    \end{align*}
\end{definition}

\begin{remark}
\label{rmk:boolean_close_to_regular}
    For any rank $r$, we have $\R_{A}^{\F_p} (O(r^{p-1})) \le \boolRigidity{A}{r} \le \R_{A}^{\F_p} (r)$.
\end{remark}

\begin{proofof}{the first inequality}
    Let $L \in \F_p^{N \times N}$ be the rank-$r$ matrix that approximates $A$ in the boolean sense, then $\tilde{L} \in \F_p^{N \times N}$ defined by $\tilde{L}[i,j] \defeq \bool\bk{L[i,j]}$ for any $(i,j) \in [N]^2$ approximates $A$ in the regular sense with the same error rate. Moreover, we prove that $\rank\bk{\tilde{L}} \le O(r^{p-1})$ below, using a standard technique called the \emph{polynomial method}.

    Let $L \eqdef U^{\top} V$ be the low-rank decomposition of $L$, where $U,V$ are matrices in $\F_p^{r \times N}$. Then, the $(i,j)$-entry of $L$ can be represented as 
    $L[i,j] = \vec{u_i}^\top \vec{v_j} = \sum_{k=1}^{r} \vec{u_i}[k] \vec{v_j}[k]$, where $\vec{u_i}$ is the $i$-th column vector of $U$ and $\vec{v_j}$ is defined similarly.
    By Fermart's little theorem, each entry of $\tilde{L}$ can be represented as 
    \begin{align*}
        \label{eq:low-rank_from_Fermart}
        \tilde{L}[i,j] = \bool\bk{L[i,j]}
        = 1 - 2 \cdot \bk{L[i,j] - 1}^{p-1}
        = 1 - 2 \cdot \bk*{\sum_{k=1}^{r} \vec{u_i}[k] \vec{v_j}[k] - 1}^{p-1}. \numberthis
    \end{align*}
    By expanding the $(p-1)$-th power in \eqref{eq:low-rank_from_Fermart}, we can further write the RHS of \eqref{eq:low-rank_from_Fermart} as 
    \begin{align*}
        \label{eq:expand_power}
        \tilde{L}[i,j] = \sum_{\vec{\alpha} \in \BK{0,1}^r,\, |\vec{\alpha}|\le p-1} C_{\vec{\alpha}} \vec{u_i}^{\vec{\alpha}} \vec{v_j}^{\vec{\alpha}}, \numberthis
    \end{align*}
    where $C_{\vec{\alpha}}$ are constant coefficients,  and we use the notation that, for an exponent vector $\vec{\alpha}$ and a vector $\vec{u}_i$, we write $\vec{u}_i^{\vec{\alpha}}$ to denote $u_{i1}^{\alpha_1} u_{i2}^{\alpha_2} \cdots u_{ir}^{\alpha_r}$. As there are at most $O(r^{p-1})$ different terms in the sum in \eqref{eq:expand_power}, one can conclude that $\rank\bk{\tilde{L}} \le O(r^{p-1})$ and $\R_{A}^{\F_p} (O(r^{p-1})) \le \boolRigidity{A}{r}$. 
\end{proofof}

We also consider the notion of \emph{probabilistic} Boolean rank, which combines Boolean rank with the notion of probabilistic rank from prior work~\cite{alman2017probabilistic}:

\begin{definition}[Probabilistic Boolean rank]
    For a matrix $A \in \F^{N \times N}$, The \defn{probabilistic Boolean rank} of an error rate $\delta \in (0,1)$ is defined as the minimum rank $r$, such that there is a distribution $\mathcal{L}$ of $N \times N$ rank-$r$ matrices, satisfying
    \begin{align*}
        \Pr_{L \sim \mathcal{L}}\Bk*{\bool\bk{A[i,j]} \neq \bool\bk{L[i,j]}} \le \delta, \quad \forall (i,j) \in [N]^2.
    \end{align*}
\end{definition}

\subsection{Matrix Power Families}

\begin{definition}[Kronecker power]
    For any matrix $A \in \BK{-1,1}^{q \times q}$, its Kronecker $n$-th power $\kro{A}$ is a matrix in $\BK{-1, 1}^{q^n \times q^n}$, with its columns and rows being indexed by vectors from $[q]^n$, and entries being
    \begin{align*}
        \kro{A}[x, y] \defeq \prod_{i=1}^n A[x_i, y_i],
    \end{align*}
    for any $x = (x_1, \ldots, x_n) \in [q]^n$ and $y = (y_1, \ldots, y_n) \in [q]^n$.
\end{definition}

\begin{definition}[Majority power]
    For any matrix $A \in \BK{-1,1}^{q \times q}$, its $n$-th power $\maj{A}$ is a matrix in $\BK{-1, 1}^{q^n \times q^n}$, with its columns and rows being indexed by vectors from $[q]^n$, and entries being
    \begin{align*}
        \maj{A}[x, y] \defeq \Maj\bk{ A[x_1, y_1], \ldots, A[x_n, y_n]},
    \end{align*}
    for any $x = (x_1, \ldots, x_n) \in [q]^n$ and $y = (y_1, \ldots, y_n) \in [q]^n$, where $\Maj: \BK{-1, 1}^n \to \BK{-1, 1}$ is the majority function defined as 
    \begin{align*}
        \Maj(a_1, \ldots, a_n) = \begin{cases}
            1 & \text{if } a_1 + \cdots + a_n \ge 0\\
            -1 & \text{otherwise}.
        \end{cases}
    \end{align*}
\end{definition}

\begin{definition}[Walsh-Hadamard matrix]
    We define the $n$-th \defn{Walsh-Hadamard matrix} $H_n \defeq \kro{H_1}$ as the $n$-th Kronecker power of the matrix $H_1 \defeq \bk*{\begin{matrix}
        1 & 1\\
        1 & -1
        \end{matrix}} $.
\end{definition}

\begin{definition}[Distance matrix]
    We define the $n$-th \defn{distance matrix} $M_n$ as the $n$-th Majority power of $M_1 \defeq \bk*{\begin{matrix}
        1 & -1\\
        -1 & 1
        \end{matrix}} $. 
    Equivalently, the distance matrix $M_n$ can be defined entry-wisely by 
    \begin{align*}
        M_n[x,y] = 2\indicator{|{x-y}| \le n/2}-1, \quad \forall x,y \in \BK{-1,1}^n.
    \end{align*}

\end{definition}

\section{Nearly Tight Rigidity Lower bounds}
\label{sec:lower_bound}
\newcommand{\UinC}{\tilde{U}}
\newcommand{\VinC}{\tilde{V}}
\newcommand{\rinC}{\tilde{r}}
\newcommand{\LinC}{\tilde{L}}
\newcommand{\vecUinCsub}{\tilde{\vec{u}_i}^{(\text{sub})}}
\newcommand{\vecVinCsub}{\tilde{\vec{v}_j}^{(\text{sub})}}
\newcommand{\vecUinC}{\tilde{\vec{u}_i}}
\newcommand{\vecVinC}{\tilde{\vec{v}_j}}
\newcommand{\vecUinCrow}{\tilde{\vec{u}}_{(k)}}
\newcommand{\vecVinCrow}{\tilde{\vec{v}}_{(k)}}

In this section, we will prove \cref{thm:rigidity_lb_from_singular_value} which establishes a rigidity lower bound for any binary matrix in a finite field in terms of its largest singular value. After that, we will apply \cref{thm:rigidity_lb_from_singular_value} to get nearly tight rigidity lower bounds for two types of matrices in small ranks.

\subsection{Rigidity Lower Bounds by Largest Singular Value}
In this subsection, we prove \cref{thm:rigidity_lb_from_singular_value}. 

\RigidityLbFromSingularValue*

The key idea behind the proof is to consider the matrix $A$ over the number field $\mathbb{C}$, even though we are arguing about its (Boolean) rigidity over $\F_p$. Specifically, the proof consists of two parts:
\begin{itemize}
    \item Showing that if a binary matrix can be approximated well by some low-rank matrix in the \emph{finite field $\F_p$}, then it can also be approximated well by some binary ``low-rank'' matrix in the \emph{number field $\mathbb{C}$} (for a more restricted notion than low-rank which we will define shortly, where the low-rank decomposition must use small entries).
    \item Bounding the ``rigidity'' (using this restricted rank notion) of binary matrices in $\mathbb{C}$ in terms of their largest singular values.
\end{itemize}

Before getting to the full proof, we give the key technical lemma behind the first step:

\begin{lemma}
\label{lm:transformation_of_low-rank_matrix_to_C}
    Suppose $L\in \F_p^{N \times N}$ is a matrix of rank $r$ over $\F_p$. Then, the matrix $\LinC\in \BK{-1,1}^{N \times N}$, defined as the entry-wise Booleanization of $L$, i.e., 
    \[\LinC[i,j] = \bool(L[i,j]), \quad \forall (i,j)\in [N]^2,\]
    satisfies the following properties:
    \begin{itemize}
        \item Viewed as a matrix over $\C$, the rank of $\LinC$ over $\C$ is at most $\rinC \defeq (p^3 + 1)^r$.
        \item Moreover, $\LinC$ admits a low-rank decomposition $\LinC = \UinC^\top \VinC$ over $\C$, where $\UinC, \VinC \in \mathbb{C}^{\rinC \times N}$ and each entry of $\UinC$ and $\VinC$ has magnitude at most $C^r$ for some constant $C$ which depends only on $p$.
    \end{itemize}
\end{lemma}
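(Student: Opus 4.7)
} The plan is to construct the complex decomposition $\tilde{L} = \tilde{U}^\top \tilde{V}$ explicitly via discrete Fourier analysis on $\F_p$. The reason to go through Fourier, rather than trying to interpolate $\bool(\cdot)$ by a real polynomial in the integer lift of $L[i,j]$, is that the operation ``reduce mod $p$'' is not expressible over $\C$ — but a $p$-th root of unity raised to an integer power depends only on that integer modulo $p$, so characters of $\F_p$ turn the modular dependence into clean complex exponentials.

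To carry this out, I would first fix a rank-$r$ decomposition $L = U^\top V$ over $\F_p$, with $U,V \in \F_p^{r\times N}$, and lift each entry to its integer representative in $\{0,1,\dots,p-1\}$, giving matrices $\hat U, \hat V$. Define the integer $\hat L[i,j] \defeq \sum_{k=1}^{r} \hat U[k,i]\,\hat V[k,j]$, so that $L[i,j] \equiv \hat L[i,j] \pmod p$. Writing $\omega \defeq e^{2\pi i/p}$ and using the standard identity $\mathbbm{1}[x \equiv 1 \pmod p] = \tfrac{1}{p}\sum_{t=0}^{p-1} \omega^{t(x-1)}$, the Booleanization becomes
\begin{align*}
  \tilde L[i,j] \;=\; -1 + 2\cdot \mathbbm{1}\!\bigl[\hat L[i,j]\equiv 1 \pmod p\bigr]
  \;=\; -1 + \frac{2}{p}\sum_{t=0}^{p-1}\omega^{-t}\prod_{k=1}^{r}\omega^{t\hat U[k,i]\hat V[k,j]},
\end{align*}
where in the last step I used $\omega^{t\hat L[i,j]} = \prod_k \omega^{t\hat U[k,i]\hat V[k,j]}$ to turn the sum in the exponent into a product, which is what lets us hope for a low-rank decomposition of each term as a matrix in $(i,j)$.

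Next, to separate the $i$- and $j$-dependence inside each of the $r$ factors, I would expand
\begin{align*}
  \omega^{t\hat U[k,i]\hat V[k,j]} \;=\; \sum_{s=0}^{p-1}\mathbbm{1}[\hat U[k,i]=s]\cdot \omega^{ts\hat V[k,j]},
\end{align*}
which is valid because for each $i$ exactly one indicator is nonzero. Multiplying across $k=1,\dots,r$ and swapping sum and product gives
\begin{align*}
  \prod_{k=1}^{r}\omega^{t\hat U[k,i]\hat V[k,j]}
  \;=\; \sum_{\vec s\in\{0,\dots,p-1\}^{r}}\underbrace{\prod_{k=1}^{r}\mathbbm{1}[\hat U[k,i]=s_k]}_{\alpha_{\vec s}(i)}\cdot \underbrace{\prod_{k=1}^{r}\omega^{ts_k\hat V[k,j]}}_{\beta_{t,\vec s}(j)},
\end{align*}
so each summand is already in rank-$1$ form $\alpha_{\vec s}(i)\beta_{t,\vec s}(j)$. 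Plugging this back into the expression for $\tilde L[i,j]$ and absorbing the constant term as one additional rank-$1$ matrix, one reads off a rank-$1$ decomposition of $\tilde L$ with at most $p\cdot p^{r}+1 = p^{r+1}+1$ terms over $\C$, which is well within the claimed bound $\tilde r = (p^{3}+1)^{r}$.

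It remains to bound the magnitudes of the entries of $\tilde U,\tilde V$. Every $\alpha_{\vec s}(i)$ is $0$ or $1$, every factor in $\beta_{t,\vec s}(j)$ lies on the unit circle, and the scalar coefficients $\tfrac{2}{p}\omega^{-t}$ have magnitude at most $2$; distributing these scalars into one side of the decomposition, all entries of $\tilde U,\tilde V$ are bounded by a constant depending only on $p$, which is comfortably at most $C^{r}$ for any $C\geq 2$. I do not expect a major obstacle here; the only non-routine idea is the use of $\F_p$-characters to reduce the modular Booleanization to pure exponentials, after which the product-to-sum expansion of each factor is forced by the requirement that the $(i,j)$-dependence split. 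The slack between the tighter $p^{r+1}+1$ bound and the stated $(p^{3}+1)^{r}$ just reflects a loose but clean upper bound, so there is no need to optimize further.
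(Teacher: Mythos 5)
Your construction is correct, and it takes a genuinely different route from the paper's. The paper also fixes a decomposition $L \equiv U^\top V \pmod p$ and works with $\omega = e^{2\pi i/p}$, but it handles the Booleanization by two polynomial interpolations: a degree-$\le p$ polynomial $f$ with $f(\omega^k)=\bool(k)$ and a degree-$\le p^2$ polynomial $g$ with $g(k)=\omega^k$ on $\{0,\ldots,p^2\}$; composing these gives an $r$-variate polynomial of individual degree $\le p^3$ in the products $u_{ik}v_{jk}$, whose monomial expansion yields the rank bound $(p^3+1)^r$ and factor entries bounded by $C^r$ (the monomials $\vec{u}_i^{\vec{\alpha}}$ and coefficients grow exponentially in $r$). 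You instead use the exact character-orthogonality identity $\indicator{x \equiv 1 \bmod p} = \frac{1}{p}\sum_{t}\omega^{t(x-1)}$ and split each factor $\omega^{t\hat U[k,i]\hat V[k,j]}$ by summing indicators over the $p$ possible values of $\hat U[k,i]$, which produces an exact rank-one expansion with only $p^{r+1}+1$ terms and all factor entries of magnitude at most $2$. Both decompositions satisfy the lemma (pad your factors with zero rows to reach exactly $\rinC=(p^3+1)^r$ rows if the stated shape is wanted), but your parameters are strictly stronger: constant-magnitude entries instead of $C^r$, and exponent base $p$ instead of $p^3+1$, which would even slightly improve the constant $c$ in \cref{thm:rigidity_lb_from_singular_value}, where the final bound scales with $C^{2r}\rinC$. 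What the paper's interpolation route buys is mainly stylistic: it parallels the polynomial-method argument used in \cref{rmk:boolean_close_to_regular} and keeps the construction in the language of low-degree polynomials in the $\F_p$-factors, whereas your character-sum argument is tailored to the exact Fourier structure of $\F_p$; there is no gap in your argument.
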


\begin{proof}
    Since $L$ is a rank-$r$ matrix over $\F_p$, it admits a low-rank decomposition $L \overset{\F_p}{=} U^{\top} V$, where $U, V \in \F_p^{r \times N}$. We can interpret $U$, $V$ as integer matrices from $\Z^{r \times N}$ whose entries are integers between $0$ and $p-1$, and express the decomposition as $L \equiv U^{\top} V \pmod{p}$.

    Let $\omega \defeq e^{2\pi i/p}$ be a $p$-th root of unity. By polynomial interpolation, there is a polynomial $f \in \mathbb{C}[x]$ with degree at most $p$, such that $f(\omega^{k}) = \bool \bk{k}$ for all $k \in [p]$. Then, the entries of $\LinC$ can be written as
        \begin{align*}
        \LinC[i,j] = \bool(L[i,j]) = f\bk*{\omega^{L[i,j]}} = f\bk*{\omega^{(U^{\top} V)[i,j]}}, \quad \forall (i,j) \in [N]^2.  \numberthis \label{ineq:lb_LinC_entries_def}
    \end{align*}
    We will use \eqref{ineq:lb_LinC_entries_def} to construct the desired low-rank decomposition of $\LinC$ below.

    Let $\vec{u}_i$ be the $i$-th column vector of $U$ for any $i \in [N]$, and $\vec{v}_j$ similarly be the $j$-th column vector of $V$. Write $\vec{u}_i = \bk{u_{i1}, \ldots, u_{ir}}^{\top}$ and $\vec{v}_j = \bk{v_{j1}, \ldots, v_{jr}}^{\top}$. We will expand $\LinC[i,j]$ as a sparse polynomial in  $u_{i1}v_{j1}, \ldots, u_{ir}v_{jr}$, namely
    \begin{align*}
        \LinC[i,j] = F\bk*{u_{i1}v_{j1}, \ldots, u_{ir}v_{jr}},  \numberthis \label{eq:L=F(uv) in sec_rigidity_lb}
    \end{align*}
    where $F$ is a $r$-variable polynomial which we construct next, which has degree at most $p^{3}$ in each variable.

    By polynomial interpolation, there is a polynomial $g \in \mathbb{C}[x]$ with degree at most $p^2$, such that $g(k) = \omega^{k}$, for all $k \in \Bk[big]{p^2}$.
    Expand $g$ into a sum of monomials as $g(x) = c_0 + c_1 x + \cdots + c_{p^2} x^{p^2}$. Then,
    \begin{align*}
        \omega^{(U^{\top} V)[i,j]} = {\prod_{k=1}^{r} \omega^{u_{ik} v_{jk}}}
        = {\prod_{k=1}^r g(u_{ik} v_{ik})}
        = {\prod_{k=1}^r \bk*{c_0 + c_1 \bk{u_{ik} v_{ik}} + \cdots + c_{p^2} \bk{u_{ik} v_{ik}}^{p^2}}},
    \end{align*}
    meaning that there is a $r$-variable polynomial $G$ of degree at most $p^{2}$ in each variable, such that 
    $\omega^{(U^{\top} V)[i,j]} = G\bk*{u_{i1}v_{j1}, \ldots, u_{ir}v_{jr}}$. Then, as $f$ is a polynomial of degree at most $p$, 
    \begin{align*}
        \LinC[i,j] = f\bk*{\omega^{(U^{\top} V)[i,j]}} = f\bk[BBig]{G\bk*{u_{i1}v_{j1}, \ldots, u_{ir}v_{jr}}}
    \end{align*}
    can be further expressed as $F(u_{i1}v_{j1}, \ldots, u_{ir}v_{jr})$, where $F \defeq f \circ G$ is a polynomial of degree at most $p^{3}$ in each variable. Moreover, expanding $F$ as
    \begin{align*}
        F(u_{i1}v_{j1}, \ldots, u_{ir}v_{jr}) = \sum_{\vec{\alpha} \in \Bk{p^3 + 1}^{r}} C_{\vec{\alpha}} \vec{u}_i^{\vec{\alpha}} \vec{v}_i^{\vec{\alpha}},
    \end{align*}
    we have further that each coefficient $C_{\vec{\alpha}}$ is bounded in magnitude by $C^{r}$ for some constant $C$ (depending on $p$). This is because both $f$ and $g$ are polynomials that only depend on $p$, with their $\l_1$-norm (sum of magnitudes of all the coefficients) being bounded by constants $C_f, C_g$, respectively, that only depend on $p$. Then, the $\l_1$-norm of $G$ is bounded by $\bk{C_g}^r$ and the $\l_1$-norm of $F$ is bounded by $C_f \cdot \bk{C_g}^{rp}$, which means that each coefficient $C_{\vec{\alpha}}$ of $F$ is bounded by $C_f \cdot \bk{C_g}^{rp} \eqdef C^r$ for some constant $C$ that only depends on $p$.

    With the low-degree polynomial expression \eqref{eq:L=F(uv) in sec_rigidity_lb} of $\LinC$, we 
    can construct matrices $\UinC, \VinC \in \mathbb{C}^{\rinC \times N}$ such that $\LinC = \UinC^{\top} \VinC$, as follows:
    \begin{itemize}
        \item Let $\vecUinC$ be the $i$-th column vector of $\UinC$ for any $i \in [N]$ and $\vecVinC$ be the $j$-th column vector of $\VinC$. Both $\vecUinC$ and $\vecVinC$ are $\bk{p^3 + 1}^r$-dimensional vectors, with their coordinates indexed by all possible $\vec{\alpha} \in \Bk{p^3 + 1}^r$.
        \item In $\vecUinC$, the entry indexed by $\vec{\alpha}$ is $\vecUinC[\vec{\alpha}] \defeq C_{\vec{\alpha}} \vec{u}_i^{\vec{\alpha}}$.
        \item In $\vecVinC$, the entry indexed by $\vec{\alpha}$ is $\vecVinC[\vec{\alpha}] \defeq \vec{v}_j^{\vec{\alpha}}$.
    \end{itemize}
    Then, we can check $\LinC = \UinC^{\top} \VinC$ by 
    \begin{align*}
        \UinC^{\top} \VinC[i,j] = \angbk{\vecUinC, \vecVinC} = \sum_{\vec{\alpha} \in \Bk{p^3 + 1}^{r}} C_{\vec{\alpha}} \vec{u}_i^{\vec{\alpha}} \vec{v}_i^{\vec{\alpha}} = \LinC[i,j], \quad \forall (i,j) \in [N]^2.
    \end{align*}
    Hence, the matrix $\LinC$ has rank at most $\rinC = (p^3+1)^r$ over $\mathbb{C}$, with a low-rank decomposition $\LinC = \UinC^{\top} \VinC$ such that each entry of $\UinC$ and $\VinC$ is bounded in magnitude by $C^{r}$ for some constant $C$ which depends only on $p$.
    \end{proof}

We now prove \cref{thm:rigidity_lb_from_singular_value} using \cref{lm:transformation_of_low-rank_matrix_to_C}.
\begin{proofof}{\cref{thm:rigidity_lb_from_singular_value}}
    Suppose that $\boolRigidity{A}{r} = s$, then by the definition of matrix rigidity, 
    there is a rank-$r$ matrix $L$ in $\F_p$, such that 
    \begin{align*}
        \bool\bk{A[i,j]} = \bool\bk{L[i,j]}, \quad \forall (i,j) \in I, \numberthis \label{eq:A=UV mod p in sec_rigidity_lb}
    \end{align*}
    for some set $I \subset [N]^2$ of indices with $\abs{I} \ge N^2 - s.$ 
    By applying \cref{lm:transformation_of_low-rank_matrix_to_C} on $L$, the matrix $\LinC$ defined by applying Booleanization entry-wise on $L$ admits a low-rank decomposition $\LinC = \UinC^\top \VinC$ over $\C$, where $\UinC, \VinC \in \C^{\rinC \times N}$ with each entry being bounded in magnitude by $C^r$. Hence, \eqref{eq:A=UV mod p in sec_rigidity_lb} can be written as 
    \begin{align*}
        A[i,j] = \bool\bk{A[i,j]} = \bool\bk{L[i,j]} = \LinC[i,j] = \bk*{\UinC^\top \VinC}[i,j], \quad \forall (i,j)\in I.\numberthis \label{eq:lb_wirte_rigidity_over_C}
    \end{align*}
    
    Now we use \eqref{eq:lb_wirte_rigidity_over_C} to give a lower bound on the sparsity $s = N^2 - \abs{I}$ in terms of the rank $r$ and the largest singular value $\sigma_1$ of $A$. As $A$ and $\LinC$ are both binary matrices taking values from $\BK{-1, 1}$, we have
    \begin{align*}
        s = \nnz(A - \LinC) = \frac{1}{4}\sum_{(i,j) \in [N]^2} \bk*{A[i,j] - \LinC[i,j]}^2
        = \frac{1}{4}\sum_{(i,j) \in [N]^2} \bk*{2 - A[i,j] \LinC[i,j]}. \numberthis \label{eq:s < sum(2 - AL) in sec_rigidity_lb}
    \end{align*}
    As $\LinC$ is a rank-$\rinC$ matrix with $\LinC = \UinC^{\top} \VinC$, we can write $\LinC = \sum_{k=1}^{\rinC} {\vecUinCrow}^{\top} \vecVinCrow$, where $\vecUinCrow$ and $\vecVinCrow$ are the $k$-th row vector of $\UinC$ and $\VinC$, respectively. Moreover, for any $k \le \rinC$, 
    \begin{align*}
        &\abs[BBig]{\sum_{(i,j) \in [N^2]} A[i,j] \bk*{{\vecUinCrow}^{\top} \vecVinCrow}[i,j]}
        =\abs[BBig]{\sum_{(i,j) \in [N^2]} A[i,j] {\vecUinCrow}[i] \vecVinCrow[j]}\\
        {}={}& \abs*{\vecUinCrow A \vecVinCrow^{\top}}
        {}\le{} \norm[Big]{\vecUinCrow}_2 \norm[Big]{A \vecVinCrow^{\top}}_2
        {}\le{} \norm[Big]{\vecUinCrow}_2 \cdot \bk*{\sigma_1 \norm[Big]{\vecVinCrow^{\top}}_2}
        {}\le{} \sigma_1 C^{2r} N,
    \end{align*}
    where the first inequality is the Cauchy-Schwartz inequality, the second inequality uses the definition of the largest singular value $\sigma_1$ and the third inequality is because each entry of $\vecUinCrow$ and $\vecVinCrow$ is bounded by $C^r$, hence $\norm[Big]{\vecUinCrow}$ and $\norm[Big]{\vecVinCrow}$ are both bounded by $C^r \sqrt{N}$. Summing over all $k$'s, we get 
    \begin{align*}
        \abs[BBig]{\sum_{(i,j) \in [N^2]} A[i,j] \LinC[i,j]}
        \le \sum_{k=1}^{\rinC} \abs[BBig]{\sum_{(i,j) \in [N^2]} A[i,j] \bk*{{\vecUinCrow}^{\top} \vecVinCrow}[i,j]}
        \le \sigma_1 C^{2r} \rinC N. \numberthis \label{eq: sum(AL) < C^r in sec_rigidity_lb}
    \end{align*}
    Plugging \eqref{eq: sum(AL) < C^r in sec_rigidity_lb} to \eqref{eq:s < sum(2 - AL) in sec_rigidity_lb}, we get 
    \begin{align*}
        s \ge \frac{N^2}{2} - \frac{1}{4} \abs[BBig]{\sum_{(i,j) \in [N^2]} A[i,j] \LinC[i,j]}
        \ge N^2 \bk*{\frac{1}{2} - \frac{\sigma_1 C^{2r} \rinC}{4N} }. \numberthis \label{eq:s < N^2(1/2-tiny) in sec_rigidity_lb}
    \end{align*}
    As $\rinC \defeq \bk{p^3 + 1}^r$ is also exponential in $r$, there is a constant $c$ (depending on $p$) such that $C^{2r} \rinC \le c^r$ for any $r \in \N$. Hence, we can conclude the desired result from \eqref{eq:s < N^2(1/2-tiny) in sec_rigidity_lb} that
    \begin{align*}
        &\boolRigidity{A}{r} \ge N^2 \bk*{\frac{1}{2} - \frac{c^r \sigma_1}{N}}. \qedhere
    \end{align*}
\end{proofof}

\subsection{Rigidity Lower Bound for Kronecker Matrices}
In this subsection, we apply \cref{thm:rigidity_lb_from_singular_value} to Kronecker matrices to get nearly tight rigidity lower bounds for small ranks.

\KroneckerLb*

    According to \cref{thm:rigidity_lb_from_singular_value}, in order to prove Theorem~\ref{thm:kronecker_lb}, it suffices to bound the largest singular value of $\kro{A}$. The following lemma will help us to do this.
    \begin{lemma}
        \label{lm:singular_value_of_A}
        For any matrix $A \in \BK{-1,1}^{q \times q}$ with $\rank(A) > 1$, the largest singular value $\sigma_1 \defeq \sigma_1(A)$ of $A$ is strictly smaller than $q$.
    \end{lemma}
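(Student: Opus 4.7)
The plan is to deduce the claim from two standard identities: that $\sigma_1(A) \leq \|A\|_F$ for any matrix $A$, and that $\|A\|_F^2 = \sum_i \sigma_i(A)^2$. Since $A \in \{-1,1\}^{q \times q}$, every entry has magnitude $1$, so $\|A\|_F^2 = q^2$ and hence $\sigma_1(A) \leq q$. This gives the non-strict inequality for free, so the only real work is ruling out equality.

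To handle the equality case, I would observe that if $\sigma_1(A) = q$, then $\sigma_1(A)^2 = q^2 = \sum_{i=1}^q \sigma_i(A)^2$, which forces $\sigma_i(A) = 0$ for all $i \geq 2$. By the singular value decomposition, this means $A$ has rank at most $1$, contradicting the hypothesis $\text{rank}(A) > 1$. Therefore $\sigma_1(A) < q$ strictly.

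I do not expect any real obstacle here; the whole argument is a two-line consequence of the Frobenius-norm/singular-value identity together with the entry-magnitude constraint. The only subtle point worth stating explicitly is that $\text{rank}(A)$ here refers to the rank over $\mathbb{C}$ (or equivalently $\mathbb{R}$), which is the rank relevant to the singular value decomposition; the hypothesis $\text{rank}(A) > 1$ in the statement is of course meant over $\mathbb{C}$ since we are discussing singular values. If one wanted a quantitative improvement, one could additionally note that $\sigma_1(A)^2 \leq q^2 - \sigma_2(A)^2$, so any lower bound on $\sigma_2(A)$ (for instance from the Hoffman--Wielandt inequality applied to a rank-$1$ approximation) yields a nontrivial gap $q - \sigma_1(A) \geq \Omega(\sigma_2(A)^2/q)$, but the statement as written does not require this.
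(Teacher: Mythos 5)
Your proof is correct, and it takes a genuinely different route from the paper. The paper proceeds via the eigenvector of $A^\top A$ corresponding to $\sigma_1^2$: it picks the coordinate of maximum magnitude, bounds $\sigma_1^2 \le \sum_i \lvert B[i,j] \rvert \le q^2$ where $B = A^\top A$, and then observes that equality forces every column of $A$ to coincide with the $j$-th column, hence $\mathrm{rank}(A) = 1$. Your argument instead relies on the identity $\|A\|_F^2 = \sum_i \sigma_i(A)^2 = q^2$ and concludes that $\sigma_1 = q$ would force $\sigma_i = 0$ for all $i \ge 2$, again giving rank at most $1$. Both proofs are short and elementary, but yours is arguably cleaner: it avoids introducing an eigenvector and a case analysis on its maximal coordinate, and reduces the strict inequality to a one-line consequence of a standard spectral identity. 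The paper's argument has the minor virtue that the non-strict bound $\sigma_1 \le q$ already follows assuming only $\lvert A[i,j] \rvert \le 1$ via a row-sum bound on $A^\top A$, but for the strict inequality both approaches equally require the entries to have magnitude exactly $1$, so there is no real generality gained. Your remark about rank being taken over $\mathbb{C}$ is a reasonable clarification and matches the paper's intent, since $A$ is a $\pm 1$ matrix and singular values are computed over $\mathbb{C}$.
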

    \begin{proof}
        By the definition of singular value, $\sigma_1^2$ is one of the eigenvalues of the matrix $B = A^{\top} A$, hence there is an eigenvector $\vec{v} \in \mathbb{C}^{q}$ such that 
        $ B\vec{v} = A^{\top} A \vec{v} = \sigma_1^2 \vec{v}.$ 
        Suppose $\vec{v} = \bk{v_1, \ldots, v_q}$ and assume $j \in [q]$ is the index maximizing $\abs{v_j}$. Then,
        \begin{align*}
            \sigma_1^2 \abs{v_j} = \abs[BBig]{\bk*{B \vec{v}}[j]}
            = \abs*{\sum_{i=1}^{q} B[i,j] v_i}
            \le \sum_{i=1}^q \abs[BBig]{B[i,j]} \abs[Big]{v_i}
            \le \sum_{i=1}^q \abs[BBig]{B[i,j]} \abs[Big]{v_j}.
        \end{align*}
        Hence, $\sigma_1^2 \le \sum_{i=1}^q \abs{B[i,j]}$. Moreover, as $\abs*{B[i,j]} = \abs*{\sum_{\l = 1}^q A[\l, i] A[\l, j]} \le q$ for any $i \in [q]$, we get $\sigma_1^2 \le q^2$, i.e., $\sigma_1 \le q$.

        We further check the equality $\sigma_1 = q$ cannot hold. Otherwise, we have $\abs*{\sum_{\l = 1}^q A[\l, i] A[\l, j]} = q$, which means $A[\l, i] = A[\l, j]$ for all $i, \l \in [q]$, i.e., all the columns of $A$ are same as the $j$-th column. This means that $A$ has rank 1, a contradiction. Hence, we have $\sigma_1 < q$, as desired.
    \end{proof}

\begin{proofof}{\cref{thm:kronecker_lb}}
    Using \cref{lm:singular_value_of_A}, the largest singular value $\sigma_1(\kro{A})$ of $\kro{A}$ can be represented as $\sigma_1^n$, where $\sigma_1 < q$ is the largest singular value of $A$. Let $c_1 > 0$ be a parameter to be determined. By \cref{thm:rigidity_lb_from_singular_value}, the rigidity of $\kro{A}$ can be bounded as 
    \begin{align*}
        \boolRigidity{\kro{A}}{c_1 n} 
        \ge q^n \bk*{\frac{1}{2} - \frac{c^{c_1 n} \cdot \sigma_1^n}{q^n}}
        = q^n \bk*{\frac{1}{2} - \bk*{\frac{c^{c_1}\sigma_1}{q}}^n}
    \end{align*}
    for some constant $c>0$ which is defined in \cref{thm:rigidity_lb_from_singular_value}. As $\sigma_1 < q$, there is a sufficiently small constant $c_1 > 0$ such that $c_2 \defeq {c^{c_1}\sigma_1}/{q} < 1$, which implies the desired result.
\end{proofof}

\subsection{Rigidity Lower Bound for the Distance Matrix}
For any $n \in \N$, we define the $2^n \times 2^n$ matrix $M_n$ called the \defn{distance matrix} as $M_n \defeq \maj{A}$, where $A = \bk*{\begin{matrix}
    1 & -1\\
    -1 & 1
\end{matrix}}$. In this subsection, we prove the following rigidity lower bound for $M_n$:
\HammingLb*

    Similar to the previous subsection, we need to bound the largest singular value of $M_k$. Here we explicitly compute the singular values as follows:
    \begin{claim}
    \label{clm:list_eigenvalues}
        All the eigenvalues of $M_n$ are listed below:
        \begin{align*}
            \lambda_y = \sum_{z \in \BK{0,1}^n} \bk*{2\indicator{|z| \le n/2}-1} (-1)^{\angbk{y,z}},\quad \forall y \in \BK{0,1}^n.
        \end{align*}
    \end{claim}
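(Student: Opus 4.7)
The plan is to exploit the convolution (Cayley) structure of $M_n$: define $f: \BK{0,1}^n \to \BK{-1,1}$ by $f(z) \defeq 2\indicator{|z| \le n/2} - 1$. Then $M_n[x,y] = f(x \oplus y)$ for all $x,y \in \BK{0,1}^n$, which follows directly from $M_n = \maj{M_1}$: since $M_1[a,b] = 1$ iff $a = b$, the sum $M_1[x_1,y_1] + \cdots + M_1[x_n,y_n]$ equals $n - 2|x \oplus y|$, which is nonnegative exactly when $|x \oplus y| \le n/2$, so the majority outputs $1$ precisely on this set.

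With this in hand, the $2^n$ characters of $\F_2^n$ are an eigenbasis. For each $y \in \BK{0,1}^n$, let $v_y \in \mathbb{R}^{2^n}$ be the vector with $v_y[x] \defeq (-1)^{\angbk{y,x}}$. Then I would compute
\begin{align*}
(M_n v_y)[x] &= \sum_{x' \in \BK{0,1}^n} f(x \oplus x') (-1)^{\angbk{y,x'}} = \sum_{z \in \BK{0,1}^n} f(z) (-1)^{\angbk{y, x \oplus z}} \\
&= (-1)^{\angbk{y,x}} \sum_{z \in \BK{0,1}^n} f(z) (-1)^{\angbk{y,z}} = \lambda_y \cdot v_y[x],
\end{align*}
using the substitution $z = x \oplus x'$ together with the $\F_2$-bilinearity identity $\angbk{y, x\oplus z} \equiv \angbk{y,x} + \angbk{y,z} \pmod 2$. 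Hence each $v_y$ is an eigenvector of $M_n$ whose eigenvalue is exactly the $\lambda_y$ stated in the claim.

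Finally, since the $2^n$ characters $\BK{v_y}_{y \in \BK{0,1}^n}$ are pairwise orthogonal (they are, up to scaling, the rows of the Walsh--Hadamard matrix $H_n$) and therefore linearly independent, they form a complete eigenbasis for $M_n$. Thus the $2^n$ scalars $\BK{\lambda_y}_{y \in \BK{0,1}^n}$ are exactly the eigenvalues of $M_n$ with multiplicity, as claimed. The only mild technicality is verifying the Cayley identity $M_n[x,y] = f(x \oplus y)$ from the definition of the Majority power; the rest of the argument is a routine Fourier-diagonalization computation with no real obstacle.
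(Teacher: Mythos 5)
Your proposal is correct and follows essentially the same route as the paper: diagonalizing $M_n$ by the Fourier characters $v_y[x] = (-1)^{\angbk{y,x}}$, performing the change of variables $z = x \oplus x'$, and invoking orthogonality of the characters to conclude that all eigenvalues have been enumerated. The only difference is that you spell out the Cayley identity $M_n[x,y] = f(x\oplus y)$ from the Majority-power definition, which the paper simply asserts; this is a fine (and welcome) extra detail but not a different argument.
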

    \begin{proof}
        By the definition of the distance matrix $M_n = \maj{A}$, we can represent each entry of $M_n$ as 
        \begin{align*}
            M_n[x,y] = \Maj\bk{A[x_1,y_1], \ldots, A[x_n, y_n]} = 2\indicator{|{x-y}| \le n/2}-1
        \end{align*}
        for each $x, y \in \BK{0,1}^n$. Below, we check that $\lambda_y$ is an eigenvalue of $M_n$, with associated eigenvector $v_y \in \BK{-1,1}^{2^n}$ defined as $v_y[x] = (-1)^{\angbk{y,x}}$ for any $x \in \BK{0, 1}^n$: 
        \begin{align*}
            M_n v_y [x] 
            {}={}& \sum_{z \in \BK{0,1}^n} M_n[x,z] v_y[z]
            = \sum_{z \in \BK{0,1}^n} \bk*{2\indicator{|{x-z}| \le n/2}-1} (-1)^{\angbk{y,z}}\\
            {}={}& (-1)^{\angbk{y,x}}\sum_{z \in \BK{0,1}^n} \bk*{2\indicator{|{x\oplus z}| \le n/2}-1} (-1)^{\angbk{y,x\oplus z}}\\
            {}={}& v_y[x]\sum_{z \in \BK{0,1}^n} \bk*{2\indicator{|{z}| \le n/2}-1} (-1)^{\angbk{y,z}}
            {}={} \lambda_y v_y[x].
        \end{align*}
        Here, we use $x \oplus z$ to denote the bit-wise XOR of $x$ and $z$: $(x_1 \oplus z_1, \ldots, x_n \oplus z_n)$. As the vectors $v_y$ form an orthogonal basis, we have enumerated all the eigenvalues of $M_n$ as  $\BK{\lambda_y : y \in \BK{0,1}^n}$.
    \end{proof}

Now that we have listed all the singular values of $M_n$, we can bound the largest one.

    \begin{claim}
    \label{clm:singular_bound_hamming}
        The largest singular value $\sigma_1(M_n)$ of $M_n$ is bounded above by $O\bk{2^n/\sqrt{n}}$.
    \end{claim}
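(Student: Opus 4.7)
The plan is to use \cref{clm:list_eigenvalues} to enumerate the eigenvalues and then bound $\max_y |\lambda_y|$ uniformly in $y$. Since $M_n$ is real symmetric, each singular value equals the absolute value of an eigenvalue, so it suffices to show $|\lambda_y| = O(2^n/\sqrt{n})$ for every $y \in \BK{0,1}^n$.

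First, we would peel off the trivial character $y = 0$: here $\lambda_0 = 2|B_{n/2}| - 2^n$, which equals $\binom{n}{n/2}$ when $n$ is even and $0$ when $n$ is odd, so $|\lambda_0| = O(2^n/\sqrt{n})$ by Stirling. For $y \neq 0$, the orthogonality relation $\sum_z (-1)^{\angbk{y,z}} = 0$ lets us rewrite
\begin{align*}
\lambda_y = 2 \sum_{z:\, |z| \le n/2} (-1)^{\angbk{y,z}} = 2 \sum_{t=0}^{\lfloor n/2\rfloor} K_t(k;n), \qquad k \defeq |y|,
\end{align*}
where $K_t(k;n) \defeq \sum_{z:\, |z|=t} (-1)^{\angbk{y,z}} = \sum_i (-1)^i \binom{k}{i}\binom{n-k}{t-i}$ is the Krawtchouk polynomial, whose value depends only on $k$ and $t$ by symmetry.

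The key step is the partial-sum identity
\begin{align*}
\sum_{t=0}^m K_t(k;n) = K_m(k-1;\,n-1) \qquad (k \ge 1),
\end{align*}
which we would prove by dividing the generating function $(1-z)^k (1+z)^{n-k} = \sum_t K_t(k;n)\, z^t$ by $1-z$ and comparing $z^m$-coefficients on both sides: the left-hand side becomes the partial sum, while the right-hand side becomes $(1-z)^{k-1}(1+z)^{n-k}$, whose $z^m$-coefficient is $K_m(k-1;n-1)$. Specializing to $m = \lfloor n/2 \rfloor$ and combining with the trivial triangle-inequality bound $|K_m(k-1;n-1)| \le \sum_i \binom{k-1}{i}\binom{n-k}{m-i} = \binom{n-1}{m}$ (Vandermonde), we would obtain
\begin{align*}
|\lambda_y| \le 2 \binom{n-1}{\lfloor n/2 \rfloor} = O(2^n/\sqrt{n})
\end{align*}
by Stirling, which combined with the $y = 0$ case proves $\sigma_1(M_n) = O(2^n/\sqrt{n})$.

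The only slightly delicate step is locating the partial-sum identity; once it is in hand the proof collapses to the trivial Vandermonde-based bound on a single central Krawtchouk value, which is already $\Theta(2^n/\sqrt{n})$. In particular, no sharper Krawtchouk asymptotics, no local CLT, and no Fourier analysis of the majority function are needed—the trivial bound happens to be tight enough to match the target.
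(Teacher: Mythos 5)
Your proof is correct, but it takes a genuinely different route from the paper's. The paper, after listing the eigenvalues, uses a purely combinatorial cancellation: for $y=0$ it pairs each $z$ with its complement $\overline{z}$, and for $y \neq 0$ (say $y_1 \neq 0$) it pairs $z$ with $z^{\oplus 1}$, observing that whenever $|z| \notin \left[\frac{n}{2}-1, \frac{n}{2}+1\right]$ the indicator is unchanged while the characters cancel; this leaves only the $O\bk{2^n/\sqrt{n}}$ terms in the three middle Hamming layers. You instead recognize $\lambda_y$ (for $y \neq 0$, after using orthogonality to drop the $-1$ part) as twice a partial sum of Krawtchouk values $\sum_{t \le n/2} K_t(|y|;n)$, collapse that partial sum via the generating-function identity $\sum_{t=0}^m K_t(k;n) = K_m(k-1;n-1)$ (dividing $(1-z)^k(1+z)^{n-k}$ by $1-z$), and finish with the trivial Vandermonde bound $\abs{K_m(k-1;n-1)} \le \binom{n-1}{m}$; your $y=0$ computation is also fine. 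All steps check out (including the restriction $k \ge 1$, which holds since $y \neq 0$), and the bound $2\binom{n-1}{\floor{n/2}}$ you obtain is in fact slightly sharper in the constant than the paper's count of three middle layers. The trade-off: the paper's pairing argument is more elementary and self-contained, needing no Krawtchouk formalism, while your identity-based argument is essentially an algebraic telescoping of the same cancellation and generalizes more readily (e.g.\ to other thresholds $m$ or to bounding partial character sums over Hamming balls of arbitrary radius).
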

    \begin{proof}
        As $M_n$ is a symmetric matrix, we only need to show that for any eigenvalue $\lambda_y$ defined in \cref{clm:list_eigenvalues}, we have $\abs{\lambda_y} = O\bk{2^n/\sqrt{n}}$.

        If $y$ is the zero vector, i.e., $y = (0, \ldots, 0)$, then 
            $\lambda_y = \sum_{z \in \BK{0,1}^n} \bk*{2\indicator{|{z}| \le n/2}-1}.$
        For each $z \in \BK{0,1}^n$ with $|{z}| < n/2 $, we pair it with $\overline{z} \defeq (1-z_1, \ldots, 1-z_n)$ in the sum above, so that the paired summands will cancel out since  
        $\bk*{2\indicator{|{z}| \le n/2}-1} + \bk*{2\indicator{|{\overline{z}}| \le n/2}-1} = 0.$
        Hence, 
        \begin{align*}
            |\lambda_y| 
            = \abs*{\sum_{\substack{z \in \BK{0,1}^n, |z| = n/2}} \bk*{2\indicator{|{z}| \le n/2}-1}}
            \le {\sum_{\substack{z \in \BK{0,1}^n\\ |z| = n/2} } 1} 
            \le \binom{n}{n/2} = O\bk*{2^n/\sqrt{n}}.
        \end{align*}

        Otherwise, $y$ is non-zero, say, $y_1 \neq 0$. Then, for each $z\in \BK{0,1}^n$ with $|z| \notin \Bk*{\frac{n}{2}-1, \frac{n}{2}+1}$, we pair it with $z^{\oplus 1} \defeq (1 - z_1, z_2, \ldots, z_n)$. Thus, $\indicator{|z| \le n/2} = \indicator{|z^{\oplus 1}| \le n/2}$ and $(-1)^{\angbk{y,z}} + (-1)^{\angbk{y, z^{\oplus 1}}} = 0$, hence their corresponding terms in the sum defining $\lambda_y$ again cancel out. Thus,
        \begin{align*}
            |\lambda_y| 
            =& \abs*{\sum_{\substack{z \in \BK{0,1}^n, |z| \in \Bk*{\frac{n}{2} - 1, \frac{n}{2} + 1}}} \bk*{2\indicator{|{z}| \le n/2}-1}}
            \le {\sum_{\substack{z \in \BK{0,1}^n\\ |z| \in \Bk*{\frac{n}{2} - 1, \frac{n}{2} + 1} } }1} 
            = O\bk*{2^n/\sqrt{n}}. \qedhere
        \end{align*}
    \end{proof} 
\begin{proofof}{\cref{thm:hamming_lb}}
    Plugging the singular value bound in \cref{clm:singular_bound_hamming} into \cref{thm:rigidity_lb_from_singular_value}, we show that there is a constant $c > 1$, such that for any rank $r$,
    \begin{align*}
        &\boolRigidity{M_n}{r}
        \ge 4^n \bk*{\frac{1}{2} - \frac{c^r \cdot 2^n /\sqrt{n}}{2^n}}
        = 4^n \bk*{\frac{1}{2} - \frac{c^r }{\sqrt{n}}}. 
    \end{align*}
    For any $\eps > 0$, we can take a sufficiently small constant $\beta > 0$ such that $\beta \log c < \eps$, then 
    \begin{align*}
        &\boolRigidity{M_n}{\beta \log n}
        \ge 4^n \bk*{\frac{1}{2} - \frac{c^{\beta \log n} }{\sqrt{n}}}
        > 4^n \bk*{\frac{1}{2} - \frac{1 }{n^{1/2 - \eps}}}. \qedhere
    \end{align*}
\end{proofof}

\section{Hardness Amplification for Rigidity}
\label{sec:amplification}
We now move on to proving our hardness amplification results.

\subsection{Hardness Amplification for Kronecker Products}

\KroneckerAmp*

In the remainder of this subsection, we prove Theorem~\ref{thm:amplification_kro}.

    As $\boolRigidity{A}{r} \le \delta \cdot q^2$, there is a rank-$r$ matrix $L \in \F_p^{q \times q}$, such that
    \begin{align*}
        \bool(A[i,j]) = \bool(L[i,j]), \quad \forall (i,j) \in I,
    \end{align*}
    where $I \subset [q]^2$ is a set of indices with $|I| \ge (1 - \delta) \cdot q^2$. The goal of this proof is to construct a low-rank matrix $\tilde{L} \in \F_p^{q^n \times q^n}$ that approximates $\kro{A}$.

    We index the rows and columns of $\kro{A}$ by vectors in $[q]^n$. Specifically, for any $x = (x_1, \ldots, x_n)$, and  $y = (y_1, \ldots, y_n) \in [q]^n$, the corresponding entry in $A$ is
    \begin{align*}
        \kro{A} [x,y] = A[x_1, y_1] \cdots A[x_n, y_n].
    \end{align*}

    Guided by the intuition introduced in \cref{subsec:overview_amplification_kro}, the matrix $\tilde{L}$ will be the low-rank approximation of the matrix $\kro{L}$ (which further approximates $\kro{A}$), constructed via a degree-$1$ polynomial $\tilde{\pi}$ that approximates the multiplication function $\pi: \F_p^n \to \F_p$, defined as follows.
    \begin{lemma}
    \label{lm:linear_polynomial_approx}
        Let $a$ be a uniformly random vector from $\F_p^n$. Define $\tilde{\pi}_a : \F_p^n \to \F_p$ as
        \begin{align*}
            \tilde{\pi}_a(z_1, z_2, \ldots, z_n) \defeq 1 + \sum_{i=1}^n a_i (z_i - 1).
        \end{align*}
        Then, for $z_1, \cdots, z_n \in \F_p$ that are not all $1$'s, over all random seeds $a$,
        \begin{align*}
            \Pr_{a \in \F_p^n}\Bk*{\bool\bk*{\tilde{\pi}_a (z_1, \cdots, z_n)} = 1} = \frac{1}{p}. 
        \end{align*}
        Moreover, for $z_1 = \cdots = z_n = 1$, 
        \begin{align*}
            \Pr_{a \in \F_p^n}\Bk*{\bool\bk*{\tilde{\pi}_a (z_1, \cdots, z_n)} = 1} = 1. 
        \end{align*}
    \end{lemma}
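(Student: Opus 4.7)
The plan is to handle the two cases directly from the definition of $\tilde\pi_a$, as the statement is essentially a short calculation about uniform distributions over $\F_p$ rather than anything structural.

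For the easy case when $z_1 = \cdots = z_n = 1$, every term $z_i - 1$ vanishes, so $\tilde\pi_a(z_1,\ldots,z_n) = 1$ deterministically (regardless of $a$). Thus $\bool(\tilde\pi_a(z)) = 1$ with probability $1$, proving the second part.

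For the main case, suppose not all $z_i$ are equal to $1$, so there is some index $i^\star \in [n]$ with $z_{i^\star} - 1 \neq 0$ in $\F_p$. The first step is to observe that the map $\phi: \F_p^n \to \F_p$ defined by $\phi(a) = \sum_{i=1}^n a_i(z_i - 1)$ is a nonzero $\F_p$-linear functional; by the standard fact that a surjective linear map from $\F_p^n$ to $\F_p$ pushes the uniform distribution forward to the uniform distribution, $\phi(a)$ is uniformly distributed on $\F_p$ when $a$ is uniform on $\F_p^n$. Concretely, one can fix the coordinates $a_i$ for $i \neq i^\star$ arbitrarily, and vary $a_{i^\star}$ through $\F_p$: since $z_{i^\star} - 1 \neq 0$, this traces out each residue in $\F_p$ exactly once, making $\phi(a)$ uniform.

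Consequently, $\tilde\pi_a(z) = 1 + \phi(a)$ is also uniform on $\F_p$. By the definition of $\bool$, we have $\bool(\tilde\pi_a(z)) = 1$ if and only if $\tilde\pi_a(z) \equiv 1 \pmod p$, i.e., if and only if $\phi(a) = 0$, which occurs with probability exactly $1/p$. This finishes the first part. I do not foresee any obstacle: the only thing to check carefully is the linearity argument, which is immediate once a nonzero coefficient of $\phi$ is identified.
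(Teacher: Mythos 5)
Your proof is correct and follows essentially the same route as the paper's: in the all-ones case $\tilde\pi_a$ is identically $1$, and otherwise $\tilde\pi_a(z)$ is uniform over $\F_p$ (since some coefficient $z_{i^\star}-1$ is nonzero), so $\bool(\tilde\pi_a(z))=1$ with probability exactly $1/p$. Your explicit one-coordinate averaging argument just spells out the uniformity step the paper states directly.
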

    \begin{proof}
        When $z_1, \ldots, z_n$ are not all $1$, $\tilde{\pi}_a(z_1, z_2, \ldots, z_n)$ is uniformly distributed over $\F_p$ over all random seeds $a$, hence $\Pr_{a \in \F_p^n}\Bk*{\bool\bk*{\tilde{\pi}_a (z_1, \cdots, z_n)} = 1} = {1}/{p}$. When $z_1 = \cdots = z_n = 1$, $\tilde{\pi}_a(z_1, z_2, \ldots, z_n)$ takes a fixed value $1$ hence $\Pr_{a \in \F_p^n}\Bk*{\bool\bk*{\tilde{\pi}_a (z_1, \cdots, z_n)} = 1} = 1. $
    \end{proof}

    Now, the low-rank matrix $\tilde{L}$ is defined as 
    \begin{align*}
        \tilde{L}[x,y] \defeq \tilde{\pi}_a \bk{L[x_1, y_1], \ldots, L[x_n, y_n]}
    \end{align*}
    for some seed $a \in \F_p^n$ to be determined. From the definition of $\tilde{\pi}_a$ in Lemma~\ref{lm:linear_polynomial_approx} above, we see that $\rank\bk{\tilde{L}} \le nr + 1 \le 2nr$, as $\tilde{L}$ can be represented as a linear combination of $n$ matrices of rank $r$ and an all-one matrix. Below, we prove that $\tilde{L}$ approximates $\kro{A}$ well in expectation over the choice of the seed $a$. Specifically, define
    \begin{align*}
        s \defeq \abs*{\BK*{(x,y) : \bool\bk{\tilde{L}[x,y]} \neq \bool\bk{\kro{A}[x,y]}}}, 
    \end{align*}
    and we will show that 
        $\E_{a \in \F_p^n}\Bk{s} \le q^{2n}\bk*{\frac{1}{2} - \frac{1}{2} \cdot \bk*{\frac{1}{2} - \alpha - \delta}^n},$
    i.e., 
    \begin{align*}
    \label{ineq:low-rank-approximates-well-to-prove}
        \Pr_{\substack{(x,y) \in [q]^{n} \times [q]^n \\ a \in \F_p^n}}\Bk*{\bool\bk{\tilde{L}[x,y]} \neq \bool\bk{\kro{A}[x,y]}} \le \frac{1}{2} - \frac{1}{2} \cdot \bk*{\frac{1}{2} - \alpha - \delta}^n, \numberthis
    \end{align*}
    where we view $x,y$ as uniformly chosen random variables from $[q]^{n}$. 
    Assuming that \eqref{ineq:low-rank-approximates-well-to-prove} holds, then there must be a specific choice of $a \in \F_p^n$ which achieves at most the expected number of errors, i.e., such that $s \le q^{2n}\bk*{\frac{1}{2} - \frac{1}{2} \cdot \bk*{\frac{1}{2} - \alpha - \delta}^n}$, and so the matrix $\tilde{L}$ with that seed $a$ is a good low-rank approximation of $\kro{A}$, which concludes the desired rigidity upper bound for $\kro{A}$. Hence, in order to conclude our proof of Theorem~\ref{thm:amplification_kro}, it suffices to prove \eqref{ineq:low-rank-approximates-well-to-prove}.

    We conclude \eqref{ineq:low-rank-approximates-well-to-prove} from the following lemma, where the random variables $A_i$'s and $L_i$'s are the Booleanization of matrix entires $A[x_i,y_i]$ and $L[x_i, y_i]$, respectively, with random $(x,y) \in [q]^n \times [q]^n$.
    \begin{lemma}
        Let $(A_1, L_1), \ldots, (A_n, L_n) \in \BK{-1, 1}^2$ be  independent pairs of random variables, which are identically distributed with $\Pr\Bk{A_i \neq L_i} \le \delta$ and $\abs{\Pr\Bk{A_i = 1} - \Pr\Bk{A_i = -1}} \le \alpha$ for each $i$. Let $a \in \F_p^n$ be drawn uniformly at random. Then,
        \begin{align*}
        \label{ineq:approx_in_RVs}
            \Pr\Bk{\tilde{\pi}_a(L_1, \ldots, L_n) \neq \pi(A_1, \ldots, A_n)} \le \frac{1}{2} -\frac{1}{2} \cdot \bk*{\frac{1}{2} - \alpha - \delta}^n. \numberthis
        \end{align*}
    \end{lemma}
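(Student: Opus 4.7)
The plan is to convert the question into an expectation calculation and exploit the independence of the pairs $(A_i, L_i)$. Writing $X \defeq \bool(\tilde{\pi}_a(L_1, \ldots, L_n))$ and $Y \defeq \pi(A_1, \ldots, A_n) = A_1 \cdots A_n$, both values lie in $\BK{-1, 1}$ (for odd $p$, since $A_i \in \BK{-1, 1}$ and $1 \not\equiv -1 \pmod p$), so $\Pr\Bk{X \neq Y} = \frac{1}{2} - \frac{1}{2} \E\Bk{XY}$ and it suffices to show $\E\Bk{XY} \ge \bk*{\frac{1}{2} - \alpha - \delta}^n$. I will carry this out assuming $p$ is odd; the characteristic-$2$ case is trivial, since $L_i - 1 \equiv 0 \pmod 2$ forces $\tilde{\pi}_a(L) \equiv 1$ and $\pi(A) \equiv 1$ in $\F_2$, making the error identically zero.

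The first step is to condition on $(A_1,L_1),\ldots,(A_n,L_n)$ and take the inner expectation over $a$. Lemma~\ref{lm:linear_polynomial_approx} gives
\[
  \E_a \Bk{\bool(\tilde{\pi}_a(L))} = \frac{2-p}{p} + \frac{2(p-1)}{p} \cdot \indicator{L = (1, \ldots, 1)}.
\]
Since each $L_i \in \BK{-1, 1}$, the indicator factors as $\indicator{L = (1,\ldots,1)} = \prod_{i=1}^n \frac{1 + L_i}{2}$. Multiplying by $\pi(A) = \prod_i A_i$, taking the outer expectation, and using that the pairs $(A_i, L_i)$ are iid, I obtain
\[
  \E\Bk{XY} = \frac{2-p}{p} \cdot \mu^n + \frac{p-1}{p \cdot 2^{n-1}} \cdot \nu^n,
\]
where $\mu \defeq \E\Bk{A_1}$ and $\nu \defeq \E\Bk{(1+L_1)A_1} = \mu + \E\Bk{L_1 A_1}$.

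Per-pair bounds follow directly from the hypotheses: $\abs{\mu} \le \alpha$, and $\E\Bk{L_1 A_1} = 1 - 2\Pr\Bk{L_1 \neq A_1} \ge 1 - 2\delta$, so $\nu \ge 1 - \alpha - 2\delta$. Setting $\beta \defeq 1 - 2\alpha - 2\delta$, the assumption $2\alpha + \delta < 1/2$ is equivalent to $2\alpha < \beta$ and in particular forces $\nu > 0$. Hence $\nu^n \ge (\beta + \alpha)^n$ and $\mu^n \ge -\alpha^n$, yielding
\[
  \E\Bk{XY} \ge \frac{p-1}{p \cdot 2^{n-1}} (\beta + \alpha)^n - \frac{p-2}{p} \alpha^n.
\]

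The final step is to verify that this lower bound is at least $\bk*{\frac{1}{2} - \alpha - \delta}^n = \beta^n / 2^n$. After multiplying through by $p \cdot 2^n$, the claim reduces to
\[
  2(p-1)(\beta + \alpha)^n - (p-2)(2\alpha)^n \ge p \cdot \beta^n,
\]
and since $(\beta + \alpha)^n \ge \beta^n$ (as $\alpha \ge 0$) and, crucially, $(2\alpha)^n \le \beta^n$ (by $2\alpha < \beta$), the left-hand side is at least $2(p-1)\beta^n - (p-2)\beta^n = p \cdot \beta^n$, as desired. I expect the main obstacle to be precisely this bookkeeping: the negative contribution $-\frac{p-2}{p}\alpha^n$ must be absorbed by the positive term, and this works only because the hypothesis $2\alpha + \delta < 1/2$ makes $\beta$ strictly larger than $2\alpha$, giving the clean comparison of $(2\alpha)^n$ against $\beta^n$.
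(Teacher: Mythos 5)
Your proof is correct, and it takes a cleaner route than the paper's, though the two arrive at the same algebraic expression. The paper enumerates over the number $k$ of $-1$'s among $A_1,\ldots,A_n$, conditions on whether all $L_i = 1$, and performs a binomial-sum expansion. You instead use the correlation identity $\Pr[X \neq Y] = \tfrac12 - \tfrac12\E[XY]$ for $\pm 1$-valued $X,Y$ (where $X = \bool(\tilde\pi_a(L))$ and $Y = \pi(A)$ -- which, incidentally, is the reading the paper's own proof uses, even though the lemma statement omits the $\bool$), and then exploit the product structure: since $\E_a[\bool(\tilde\pi_a(L))]$ is affine in $\prod_i(1+L_i)/2$, the whole expectation tensorizes across the independent pairs, giving
\[
  \E[XY] = \frac{2-p}{p}\,\mu^n + \frac{p-1}{p\cdot 2^{n-1}}\,\nu^n
\]
without explicit enumeration. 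One can check $\nu/2 = p_1(1-\delta_1) - p_{-1}\delta_{-1}$ in the paper's notation, so both proofs land on the same formula, but your route avoids the binomial bookkeeping. The concluding arithmetic (reducing to $(2\alpha)^n \le \beta^n$ via $2\alpha + \delta < 1/2$) is substantively identical.

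One small sign-direction slip to fix: you write ``$\mu^n \ge -\alpha^n$.'' Since the coefficient $\frac{2-p}{p}$ is \emph{nonpositive}, lower-bounding $\frac{2-p}{p}\mu^n$ requires \emph{upper}-bounding $\mu^n$, and the bound you actually need and use is $\mu^n \le \alpha^n$ (true since $|\mu| \le \alpha$). The displayed inequality that follows is nevertheless correct, so this is a typo rather than a gap.
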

    \begin{proof}
        Let $K$ be the number of $-1$'s in $A_1, \ldots, A_n$ and $L$ be the number of $-1$'s in $L_1, \ldots, L_n$. Let $p_j \defeq \Pr\Bk{A_i = j}$ and $\delta_j = \Pr\Bk{L_i \neq A_i \mid A_i = j}$ for $j \in \BK{-1,1}$. Then, $\abs{p_1 - p_{-1}} \le \alpha$ and $p_1 \delta_1 + p_{-1} \delta_{-1} \le \delta$.

        To compute the probability in the left hand side of \eqref{ineq:approx_in_RVs}, we enumerate the possible values of $K$ and $T$:
        \begin{itemize}
            \item For any $k \le n$, $\Pr\Bk{K = k} = \binom{n}{k} p_{-1}^k p_{1}^{n-k}$. Given that $K = k$, it follows that $\pi(A_1, \ldots, A_n)=(-1)^k$.
            \item Conditioned on $K = k$, we have $\Pr\Bk{T = 0 \mid K = k} = \delta_{-1}^k \bk{1 - \delta_1}^{n-k}$.
            \item Conditioned on $T = 0$ and $K = k$, according to \cref{lm:linear_polynomial_approx}, we have $\tilde{\pi}_a(L_1, \ldots, L_n) = 1$ with probability 1, which means $\tilde{\pi}_a(L_1, \ldots, L_n) = \pi(A_1, \ldots, A_n)$ with probability $\frac{1 + (-1)^k}{2}$.
            \item Conditioned on $T \neq 0$ and $K = k$, according to \cref{lm:linear_polynomial_approx}, we have $\tilde{\pi}_a(L_1, \ldots, L_n) = 1$ with probability $1/p$, which means $\tilde{\pi}_a(L_1, \ldots, L_n) = \pi(A_1, \ldots, A_n)$ with probability $\frac{1}{p} \frac{1 + (-1)^k}{2} + \frac{p-1}{p} \frac{1 - (-1)^k}{2} = \frac{1}{2} + \frac{(2-p)(-1)^k}{2p}$.
        \end{itemize}
        Combining these probabilities together, for any $k \le n$, we have 
        \begin{align*}
            &\Pr\Bk{\tilde{\pi}_a(L_1, \ldots, L_n) = \pi(A_1, \ldots, A_n) \mid K = k}\\
            {}={}& \Pr\Bk{T = 0 \mid K = k} \cdot \frac{1 + (-1)^k}{2} + \Pr\Bk{T \neq 0 \mid K = k} \cdot \bk*{\frac{1}{2} + \frac{(2-p)(-1)^k}{2p}}\\
            {}={}& \delta_{-1}^k \bk{1 - \delta_1}^{n-k} \cdot \frac{1 + (-1)^k}{2} + \bk*{1 - \delta_{-1}^k \bk{1 - \delta_1}^{n-k}} \cdot \bk*{\frac{1}{2} + \frac{(2-p)(-1)^k}{2p}}\\
            {}={}& \frac{1}{2} + \frac{(2-p)(-1)^k}{2p} + \delta_{-1}^k \bk{1 - \delta_1}^{n-k} (-1)^k \cdot \frac{p-1}{p},
        \end{align*}
        hence,
        \begin{align*}
            &\Pr\Bk{\tilde{\pi}_a(L_1, \ldots, L_n) \neq \pi(A_1, \ldots, A_n)}\\
            {}={}& \sum_{k=1}^n \Pr\Bk{K = k} \cdot \Pr\Bk{\tilde{\pi}_a(L_1, \ldots, L_n) \neq \pi(A_1, \ldots, A_n) \mid K = k}\\
            {}={}& \sum_{k=1}^n \binom{n}{k}p_{-1}^k p_1^{n-k} \bk*{\frac{1}{2} - \frac{(2-p)(-1)^k}{2p} - \delta_{-1}^k \bk{1 - \delta_1}^{n-k} (-1)^k \cdot \frac{p-1}{p}}\\
            {}={}& \frac{1}{2} - \frac{2-p}{2p}\bk*{-p_{-1} + p_1}^n - \frac{p-1}{p}\bk*{p_1(1 - \delta_1) - p_{-1}\delta_{-1}}^n\\
            {}\le{}& \frac{1}{2} + \frac{p-2}{2p}\alpha^n - \frac{p-1}{p}\bk*{\frac{1}{2} - \alpha - \delta}^n
            {}\le{} \frac{1}{2}  - \frac{1}{2} \cdot \bk*{\frac{1}{2} - \alpha - \delta}^n,
        \end{align*}
        where we used that $2\alpha + \delta < 1/2$ which implies $\alpha < 1/2 - \alpha - \delta$.
    \end{proof}

This concludes the proof of Theorem~\ref{thm:amplification_kro}.

\subsection{Hardness Amplification for Majority Product}

\MajorityAmp*

\begin{proof}
    As $\maj[k]{A}$ has probabilistic boolean rank $r$, there is a distribution $\mathcal{L}$ on matrices in $\F_p^{q^k \times q^k}$ with rank $r$, such that for any $(i,j) \in [q]^k \times [q]^k$ and a random matrix $L$ sampled from the distribution $\mathcal{L}$, 
    \[
    \label{ineq:probabilistic_rank}
        \Pr_{L \sim \mathcal{L}} \Bk*{\bool\bk{L[i,j]} \neq \bool\bk{\maj[k]{A}[i,j]}} \le \delta. \numberthis
    \]
    Below, we define a rank-$r$ matrix $\tilde{L} \in \F_p^{q^n \times q^n}$ that approximates $\maj{A}$ well.

    Recall the indices of rows and columns of $\maj{A}$ can be viewed as vectors in $[q]^n$. For any $x = (x_1, \ldots, x_n) \in [q]^n$, we define $x_\pre = (x_1, \ldots, x_k) \in [q]^k$ as the prefix of $x$ of length $k$. We define $\tilde{L}$ as 
    \begin{align*}
        \tilde{L}[x,y] = L[x_\pre, y_\pre], \quad \forall (x,y) \in [q]^n \times [q]^n,
    \end{align*}
    where $L$ is a matrix to be determined from the support of the distribution $\mathcal{L}$. By construction, $\tilde{L}$ has the same rank $r$ as $L$. To conclude a Boolean rigidity upper bound of $\maj{A}$, we need to show that if $L$ is drawn from $\mathcal{L}$, then  $\bool\bk{\tilde{L}[x,y]}$ disagrees with $\bool\bk{\maj{A}[x,y]}$ in at most $q^{2n} \bk*{\frac{1}{2} - \Omega\bk*{(1 - 2\delta) \cdot \frac{\sqrt{k}}{\sqrt{n}}}}$ entries in expectation, i.e.,
    \begin{align*}
    \label{ineq:low_rank_for_maj_to_prove}
        \Pr_{\substack{(x,y) \in [q]^n \times [q]^n\\L \sim \mathcal{L}}}\Bk*{\bool\bk{\tilde{L}[x,y]} \neq \bool\bk{\maj{A}[x,y]}} \le \frac{1}{2} -  \Omega\bk*{(1 - 2\delta) \cdot \frac{\sqrt{k}}{\sqrt{n}}}.\numberthis
    \end{align*}
    We can then fix $L$ be the best matrix in the support of $\mathcal{L}$ to achieve the desired error.
    
    To prove \eqref{ineq:low_rank_for_maj_to_prove}, recall that 
        $\maj{A}[x,y] = \Maj \bk{A[x_1, y_1], \ldots, A[x_n, y_n]}.$
    Let  $A_1, \ldots, A_n \in \BK{-1,1}$ be random variables obtained by taking random $(x,y) \in [q]^n\times[q]^n$ and then setting $A_i = A[x_i, y_i]$ for all $i$. Then, $A_1, \ldots, A_n$ are independent random variables with $\Pr\Bk{A_i = 1} = 1/2$.
    Moreover, by \eqref{ineq:probabilistic_rank}, for any possible choice of $A_1, \ldots, A_k$, over the randomness of $L$, we have 
    \[
        \Pr_{L \sim \mathcal{L}} \Bk*{\bool\bk{L[x_\pre, y_\pre]} \neq \Maj\bk{A_1, \ldots, A_k}} \le \delta.
    \]
    Hence, there is a random variable $\Delta \in \BK{-1,1}$ such that $\bool\bk{L[x_\pre, y_\pre]} = \Maj\bk{A_1, \ldots, A_k} \cdot \Delta$, with $\Pr_{L \sim \mathcal{L}}\Bk{\Delta = -1} \le \delta$ \emph{conditioned on} any possible choice of $A_1, \ldots, A_k$.
    
    We conclude \eqref{ineq:low_rank_for_maj_to_prove} from the following lemma.
    \begin{lemma}
         Let $A_1, \ldots, A_n \in \BK{-1, 1}$ be identically-distributed  independent random variables with $\Pr\Bk{A_i = 1} = 1/2$. Let $\Delta \in \BK{-1, 1}$ be a random variable which is independent of $A_{k+1}, \ldots, A_n$, such that $\Pr\Bk{\Delta = -1 | A_1 = A_1^*, \ldots, A_k = A_k^*} \le \delta$ for any possible choices of $A_1^*, \ldots, A_k^*$. Then, 
        \begin{align*}
        \label{ineq:approx_maj_in_RVs}
            \Pr\Bk*{\bk{\Delta \cdot \Maj\bk{A_1, \ldots, A_k}}  \neq \Maj\bk{A_1, \ldots, A_n}} \le \frac{1}{2} - \Omega\bk*{(1 - 2\delta) \cdot \frac{\sqrt{k}}{\sqrt{n}}}. \numberthis
        \end{align*}
    \end{lemma}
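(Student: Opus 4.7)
The plan is to reformulate the probability as an expectation of a product of signs, factor out the $\Delta$ variable using the hypothesis on its conditional expectation, and then reduce to an anti-concentration estimate for sums of Rademacher variables. Set $S_m \defeq A_1 + \cdots + A_m$ and define $\sigma \colon \Z \to \BK{-1, 1}$ by $\sigma(x) = 1$ iff $x \ge 0$, so that $\Maj(A_1, \ldots, A_m) = \sigma(S_m)$. Since $\Delta\sigma(S_k), \sigma(S_n) \in \BK{-1, 1}$,
\begin{align*}
    \Pr\Bk{\Delta\sigma(S_k) \neq \sigma(S_n)} = \frac{1 - \E\Bk{\Delta\sigma(S_k)\sigma(S_n)}}{2},
\end{align*}
so the goal becomes showing $\E\Bk{\Delta\sigma(S_k)\sigma(S_n)} \ge \Omega\bk*{(1 - 2\delta)\sqrt{k/n}}$.

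Let $R \defeq A_{k+1} + \cdots + A_n$; by the independence hypothesis on $\Delta$, the triple $(A_1, \ldots, A_k, \Delta)$ is independent of $R$. Define $g(s) \defeq \E_R\Bk{\sigma(s + R)}$, and condition on $(A_1, \ldots, A_k)$:
\begin{align*}
    \E\Bk{\Delta\sigma(S_k)\sigma(S_n)} = \E\Bk{\mu(A_1, \ldots, A_k) \cdot \sigma(S_k) g(S_k)},
\end{align*}
where $\mu(A_1, \ldots, A_k) \defeq \E\Bk{\Delta \mid A_1, \ldots, A_k}$ satisfies $\mu \ge 1 - 2\delta$ pointwise (from $\Pr\Bk{\Delta = -1 \mid A_1, \ldots, A_k} \le \delta$). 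A short case analysis exploiting the symmetric distribution of $R$ yields the identity $g(s) = \Pr\Bk{R = 0} + 2\Pr\Bk{1 \le R \le s}$ for $s \ge 0$ and a mirrored identity for $s < 0$, from which $\sigma(s) g(s) = |g(s)| \ge 0$ for every integer $s$. Combining these two pointwise inequalities,
\begin{align*}
    \E\Bk{\Delta\sigma(S_k)\sigma(S_n)} \ge (1 - 2\delta) \cdot \E\Bk{|g(S_k)|}.
\end{align*}

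It remains to establish the anti-concentration bound $\E\Bk{|g(S_k)|} = \Omega(\sqrt{k/n})$, which I expect to be the main technical step. We may assume $k \le n/2$ without loss of generality, since for $k > n/2$ one gets a stronger bound from the direct estimate that $\Maj$ of a $(k/n)$-fraction prefix of $n$ bits agrees with the full $\Maj$ with probability $1-o(1)$. Stirling's approximation applied to central binomial coefficients then gives the pointwise estimate $|g(s)| \ge c_1 |s|/\sqrt{n - k}$ for all integers $|s| \le c_2 \sqrt{n - k}$, for absolute constants $c_1, c_2 > 0$ (a standard local-CLT-style bound on Rademacher sums). A second Rademacher anti-concentration bound yields $\Pr\Bk{|S_k| \ge c_3\sqrt{k}} = \Omega(1)$, and the assumption $k \le n/2$ guarantees $c_3\sqrt{k} \le c_2\sqrt{n-k}$ for suitably chosen constants. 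Combining, $\E\Bk{|g(S_k)|} \ge \Omega\bk*{\sqrt{k}/\sqrt{n-k}} = \Omega(\sqrt{k/n})$, which finishes the plan.

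The main obstacle is the anti-concentration step; while its asymptotic shape is standard, some care is needed near $s = 0$ because $\sigma(0) = 1$ rather than $0$, so $g$ is not strictly odd and its small-$|s|$ behaviour depends on the parity of $n-k$ (for instance $|g(-1)| = \Pr\Bk{R = 0}$ vanishes when $n-k$ is odd). These are only constant-factor wrinkles and can be handled by separating parities or shifting $s$ by $1$ before applying Stirling, without affecting the final $\Omega(\sqrt{k/n})$ bound.
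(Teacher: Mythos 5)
Your argument is correct and, at its core, follows the same strategy as the paper's proof: both condition on the prefix $(A_1,\ldots,A_k)$, exploit that conditioned on the prefix the full majority agrees with the prefix majority with probability at least $1/2$ (your pointwise identity $\sigma(s)g(s)=|g(s)|\ge 0$ is exactly this fact), so that the $\Delta$-error only costs a factor of $(1-2\delta)$, and then finish with binomial anti-concentration. The packaging differs: the paper first argues it may assume WLOG that $\Delta$ is independent of all the $A_i$ with $\Pr[\Delta=-1]=\delta$ exactly (a monotonicity step), writes the error probability as $(1-p)+(2p-1)\delta$ with $p=\Pr[\Maj(A_1,\ldots,A_k)=\Maj(A_1,\ldots,A_n)]$, and lower-bounds $p-1/2$ by conditioning on the single event $\{A_1+\cdots+A_k\ge\eps\sqrt{k}\}$ and invoking a stated fact $\Pr[X\ge a]=1/2-\Theta(a/\sqrt{n})$; you instead factor out $\E[\Delta\mid A_1,\ldots,A_k]\ge 1-2\delta$ via the correlation identity and prove $\E[|g(S_k)|]=\Omega(\sqrt{k/n})$ (which equals $2p-1$, so it is the same quantitative statement) by a local-CLT/Stirling estimate. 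Your treatment of $\Delta$ is arguably cleaner, since it avoids the WLOG step; your anti-concentration step is more self-contained but also more delicate (parity, behavior at $s\in\{-1,0\}$), and you correctly identify these as constant-factor wrinkles.

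One small inaccuracy: in your reduction to $k\le n/2$ you assert that for $k>n/2$ the prefix majority agrees with the full majority with probability $1-o(1)$; this is false for, say, $k=0.51n$, where the agreement probability is a constant bounded away from $1$. What you actually need there is only agreement probability $\ge 1/2+\Omega(1)$ (since $\sqrt{k/n}\le 1$), which does hold and follows from the same tools (e.g., condition on $S_k\ge\sqrt{n-k}$, an $\Omega(1)$-probability event when $k\ge n-k$, and use $\Pr[|R|\le\sqrt{n-k}]=\Omega(1)$); alternatively, the paper's single conditioning on $\{S_k\ge\eps\sqrt{k}\}$ handles all $k<n$ with no case split at all.
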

    \begin{proof}
        The first step of the proof is to assume without loss of generality that $\Pr\Bk{\Delta = -1 \mid A_1 = A_1^*, \ldots, A_k = A_k^*} = \delta$, i.e., $\Delta$ is independent to $A_1, \ldots, A_n$ with $\Pr\Bk{\Delta  = -1} = \delta$. This is because, conditioned on any possible $A_1^*, \ldots, A_k^*$, if we let $p^* \ge 1/2$ denote the probability that $\Maj\bk{A_1, \ldots, A_n}$ agrees with $\Maj\bk{A_1, \ldots, A_k}$ and $\delta^* \le \delta$ denote the probability that $\Delta = -1$, then
        \begin{align*}
            &\Pr\Bk*{\bk{\Delta \cdot \Maj\bk{A_1, \ldots, A_k}}  \neq \Maj\bk{A_1, \ldots, A_n} \mid A_1 = A_1^*, \ldots, A_k = A_k^*}\\
            {}={} &(1 - \delta^*) (1 - p^*) + \delta^* p^*
            {}={}  (1 - p^*) + (2p^* - 1) \delta^*.
        \end{align*}
        This probability is maximized when $\delta^*$ is equal to $\delta$. Hence, we can assume without loss of generality that $\Delta$ is independent of $A_1, \ldots, A_n$ and $\Pr\Bk{\Delta = -1} = \delta$. Then,
        \begin{align*}
        \label{ineq:prob_with_independent_error}
            &\Pr\Bk*{\bk{\Delta \cdot \Maj\bk{A_1, \ldots, A_k}}  \neq \Maj\bk{A_1, \ldots, A_n}} = (1-p) + (2p-1) \delta, \numberthis
        \end{align*}
        where $p \defeq \Pr\Bk{\Maj\bk{A_1, \ldots, A_k} = \Maj\bk{A_1, \ldots, A_n}}$. We use the following fact about binomial random variables to bound $p$:
        \begin{fact}
        \label{fact:concentration_of_binomial}
            Let $X_1, \ldots, X_n \in \BK{-1,1}$ be identically-distributed independent random variables with $\Pr\Bk{X_i = 1} = 1/2$. Let $X\defeq X_1 + \cdots + X_n$. Then, for any $a \ge 1$,
            \begin{align*}
                \Pr\Bk{X \ge a} = \frac{1}{2} - \Theta\bk*{\frac{a}{\sqrt{n}}}.
            \end{align*}
        \end{fact}
        To bound $p$, let $\mathcal{E}$ be the event that $A_1+ \cdots + A_k \ge \eps\sqrt{k}$ where $\eps >0$ is a small constant
        so that, by applying \cref{fact:concentration_of_binomial} on $A_1, \ldots, A_k$, we get $\Pr\Bk{\mathcal{E}}\ge \Omega(1)$. Moreover, conditioned on $\mathcal{E}$ happening, we know that $\Maj\bk{A_1, \ldots, A_k} = 1$, and so
        \begin{align*}
            &\Pr\Bk{\Maj\bk{A_1, \ldots, A_n} = 1}
            = \Pr\Bk{A_1 + \cdots + A_n \ge 0}\\
            {}\ge{} &\Pr\Bk*{A_{k+1} + \cdots + A_n \ge -\eps{\sqrt{k}}}
            {}\ge{} \frac{1}{2} + \Omega\bk*{\frac{\sqrt{k}}{\sqrt{n}}},
        \end{align*}
        where in the last inequality we apply \cref{fact:concentration_of_binomial} over $-A_{k+1}, \ldots, -A_{n}$. Hence,
        \begin{align*}
            \Pr\Bk[\big]{\Maj\bk{A_1, \ldots, A_k} = \Maj\bk{A_1, \ldots, A_n} \mid \mathcal{E}} \ge \frac{1}{2} + \Omega\bk*{\frac{\sqrt{k}}{\sqrt{n}}}.
        \end{align*}
        On the other hand, conditioned on $\mathcal{E}$ not happening, we still have 
        \begin{align*}
            \Pr\Bk[\big]{\Maj\bk{A_1, \ldots, A_k} = \Maj\bk{A_1, \ldots, A_n} \mid \overline{\mathcal{E}}} \ge \frac{1}{2}.
        \end{align*}
        Combining these probabilities together, we get
        \begin{align*}
            p = \Pr\Bk[\big]{\Maj\bk{A_1, \ldots, A_k} = \Maj\bk{A_1, \ldots, A_n}} 
            \ge \frac{1}{2} + \Pr\Bk{\mathcal{E}} \cdot \Omega\bk*{\frac{\sqrt{k}}{\sqrt{n}}} 
            \ge \frac{1}{2} + \Omega\bk*{\frac{\sqrt{k}}{\sqrt{n}}}.
        \end{align*}
        Substituting this bound into \eqref{ineq:prob_with_independent_error}, we get 
        \begin{align*}
            &\Pr\Bk*{\bk{\Delta \cdot \Maj\bk{A_1, \ldots, A_k}}  \neq \Maj\bk{A_1, \ldots, A_n}}\\
            {}={}& (1-p) + (2p-1) \delta
            {}={} \frac{1}{2} - \bk*{p - \frac{1}{2}} ( 1-2\delta)\\
            {}\le{}& \frac{1}{2} - \Omega\bk*{(1 - 2\delta) \cdot \frac{\sqrt{k}}{\sqrt{n}}}. \qedhere
        \end{align*}
    \end{proof}
\end{proof}

\section{Discussion}
\label{sec:discussion}
In this section, we will discuss the implications of our results. 

\subsection{Slightly Improved Rigidity Lower Bounds Would Yield Razborov Rigidity}
In \cref{sec:lower_bound}, we proved nearly tight rigidity lower bound for Kronecker powers (\cref{thm:kronecker_lb}) and the distance matrices (\cref{thm:hamming_lb}) in the small-rank regime. It is natural to ask whether we can improve the rigidity lower bound such that it works for better rank or sparsity parameters. However, the following theorems show that such an improvement may be very challenging: even improving our rigidity lower bounds slightly would imply a Razborov rigidity lower bound, a famous hard problem in the fields of matrix rigidity and communication complexity.

\StrongerKro*

\StrongerMaj*

\begin{proofof}{\cref{thm:stronger_lb_kro_imply_Razborov}}
    Assume to the contrary that $\BK{A_n}_{n \in \mathbb{N}}$ is not Razborov rigid. Thus, there is a constant $c > 0$, such that for any $n$, we have $\boolRigidity{\kro{A}}{2^{\log^c n}} \le \frac{1}{3} \cdot q^{2n}$. Let $k < n$ be a parameter to be determined. We can use \cref{thm:amplification_kro} on the matrix $\kro[k]{A}$ and its $(n/k)$-th Kronecker power $\kro{A}$: as $\boolRigidity{\kro[k]{A}}{2^{\log^c k}} \le \frac{1}{3} \cdot q^{2k}$, we get
    \begin{align*}
        \boolRigidity{\kro{A}}{2^{1 + \log^c k} \cdot n} \le q^{2n} \bk*{\frac{1}{2} - \frac{1}{2} \cdot \bk*{\frac{1}{6} - \alpha}^{n/k}},
    \end{align*}
    where $\alpha$ is the difference between the fraction of $1$'s and $-1$'s in $A^{\otimes k}$. In particular, since $A$ is not the all-$1$'s or all-$(-1)$'s matrix (since its rank is not $1$), it follows that for sufficiently large $k$, we have $\alpha < 1/12$.\footnote{One can show $\alpha < 1/12$ by directly counting, although it also follows from our rigidity lower bound above. We need to show that there are roughly the same number of $1$ and $-1$ entries in $\kro[k]{A}$. To argue this, one can use \cref{thm:kronecker_lb} on $\kro[k]{A}$ to get $\boolRigidity{\kro[k]{A}}{1} \ge \boolRigidity{\kro[k]{A}}{c_1 k} \ge q^{2k} \bk*{\frac{1}{2} - c_2^k}$, which means both the all-$1$'s and all-$(-1)$'s matrices are bad approximations of $\kro[k]{A}$, so there are roughly the same number of $1$ and $-1$ entries in $\kro[k]{A}$.} Picking $k \defeq 2^{\bk{\eps\log n / 2}^{1/c}}$, so that $2^{\log^c k} = n^{\eps/2}$, we get
    \begin{align*}
        \boolRigidity{\kro{A}}{n^{1 + \eps}} 
        \le \boolRigidity{\kro{A}}{2^{1 + \log^c k} \cdot n} 
        \le q^{2n} \bk*{\frac{1}{2} - \frac{1}{2} \cdot {\frac{1}{12^{n/ 2^{\bk{\eps\log n / 2}^{1/c}}}}}}
        < q^{2n} \bk*{\frac{1}{2} - \frac{1}{2^{n/2^{(\log n)^{o(1)}}}}},
    \end{align*}
    contradicting \eqref{ineq:kro_rigidity_lb_for_larger_rank}.
\end{proofof}

\begin{proofof}{\cref{thm:stronger_lb_maj_imply_Razborov}}
    Simialr to the proof of \cref{thm:stronger_lb_kro_imply_Razborov}, we assume to the contrary that $\BK{M_n}_{n \in \mathbb{N}}$ is not Razborov rigid with $\boolRigidity{M_n}{2^{\log^c n}} \le \frac{1}{3} \cdot 4^{n}$ for some constant $c > 0$, and apply \cref{thm:amplification_maj} on the matrices $M_k$ and $M_n$ to get
    \begin{align*}
        \boolRigidity{M_n}{2^{\log^c k}} \le 4^{n} \bk*{\frac{1}{2} - \Omega\bk*{\frac{\sqrt{k}}{\sqrt{n}}}}.
    \end{align*}
    Taking $k = 2^{\bk{\log\log n + \log \beta}^{1/c}}$, we have $2^{\log^c k} = \beta \log n$, and 
    \begin{align*}
        \boolRigidity{M_n}{\beta\log n}
        \le 4^{n}\bk*{\frac{1}{2} - \Omega\bk*{\frac{2^{\frac{1}{2}\bk{\log\log n + \log \beta}^{1/c}}}{n^{1/2}}}}
        < 4^n \bk*{\frac{1}{2} - \frac{2^{\bk*{\log \log n}^{o(1)}}}{n^{1/2}}},
    \end{align*}
    contradicting \eqref{ineq:maj_rigidity_lb_for_larger_rank}.
\end{proofof}

\subsection{Obstructions to Constructing Depth-2 Circuits via Rigidity} \label{sec:obstructiondepth2}

One important application of matrix rigidity in the small-rank regime is that rigidity upper bounds for matrix $A$ can be used to construct small depth-$d$ linear circuits to compute the Kronecker power $\kro{A}$ of $A$. Specifically, \cite{alman2021kronecker} proved that
\begin{lemma}[\cite{alman2021kronecker}, Lemma 1.4]
    \label{lm:depth-2_circuit}
    Let $A \in \BK{-1,1}^{q\times q}$ be a matrix over some field $\F_p$. For any rank parameter $r \le q$, we define
    \begin{align*}
        c \defeq \log_q \bk{(r + 1) \cdot (r + \R_{A}(r) / q)}.
    \end{align*}
    Then, for any positive integer $n$, the Kronecker power $\kro{A}$ has a depth-$d$ synchronous circuit of size $O(d \cdot q^{(1 + c/d)n})$.
\end{lemma}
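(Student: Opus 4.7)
The plan is to combine a rigidity-based decomposition of $A$ with the multiplicative tensor structure of Kronecker powers, organized into $d$ blocks of $n/d$ factors each. First, by the definition of rigidity, I would write $A = L + S$ with $\rank(L) \le r$ and $\sparsity(S) \le \R_A(r) \eqdef s$, and decompose $L = UV^\top$ for some $U, V \in \F_p^{q \times r}$. This expresses the computation of $A \vec{x}$ bilinearly: each output coordinate is a sum of $r$ inner products (from the rank part) plus the sparse correction from the at-most-$s$ nonzero entries of $S$. Amortizing column-wise, the effective ``cost'' of a single column of $A$ works out to roughly $(r+1)(r + s/q) = q^c$, where the $r+1$ factor handles the $r$ rank-directions plus one sparse-correction direction per column, and the $r + s/q$ factor balances the average column sparsity of $S$ against the rank contribution.

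Next, for a block size $k = n/d$ (assuming $d \mid n$ for clarity), I would construct a depth-$2$ synchronous circuit for $\kro[k]{A}$ of size $O(q^{(1+c)k})$ by propagating this bilinear structure multiplicatively. Expanding $\kro[k]{A} = \kro[k]{(L + S)}$, each column of $\kro[k]{A}$ lies in the span of a product of $k$ local bases, each local basis having the effective size $(r+1)(r + s/q)$ from the previous step. A careful implementation should realize this bound, rather than the cruder $(r+s)^k$ from viewing $A$ as a sum of $r + s$ rank-$1$ matrices, by amortizing $S$'s nonzeros across columns instead of multiplying them in.

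Then I would assemble the depth-$d$ circuit for $\kro{A}$ from these block circuits via the identity $\kro{A} = \kro[k]{A} \otimes \cdots \otimes \kro[k]{A}$ ($d$ factors). Reshape the input as a $d$-dimensional $q^k$-tensor and apply $\kro[k]{A}$ along each of the $d$ tensor directions in sequence. Each directional application runs the block circuit on each of $q^{n-k}$ independent slices in parallel, at total cost $q^{n-k} \cdot O(q^{(1+c)k}) = O(q^{(1+c/d)n})$, using $n - k + (1+c)k = n(1 + c/d)$. Summing over $d$ directions yields the claimed $O(d \cdot q^{(1+c/d)n})$ total size. To synchronize the depth to exactly $d$ rather than $2d$, I would merge the output layer of block $i$ with the input layer of block $i+1$, since both are linear maps and their composition collapses into a single affine layer.

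The main obstacle I expect is the second step: showing that $\kro[k]{A}$ admits a depth-$2$ circuit of size exactly $O(q^{(1+c)k})$, with the refined $(r+1)(r + s/q)$ factor per tensor-factor rather than the naive $(r+s)$ from rank-$1$ expansions. The saving should follow from organizing $S$'s nonzeros column-wise and noting that column-wise sparsity averages propagate multiplicatively under the Kronecker product, but it requires careful bookkeeping of which basis vectors participate in each column of $\kro[k]{A}$ and of how the depth-$2$ structure survives the multiplicative expansion.
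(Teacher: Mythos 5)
The paper does not prove this lemma itself (it imports it from \cite{alman2021kronecker}), so the comparison is against that construction. Your steps 1--3 are sound, and the step you flag as the main obstacle (a depth-2 circuit for $A^{\otimes k}$ of size $O(q^{(1+c)k})$) is actually immediate, with no column-wise amortization needed: writing $A=[U\mid S]\cdot\bigl(\begin{smallmatrix}V\\ I\end{smallmatrix}\bigr)$ where $L=UV$, we get $A^{\otimes k}=[U\mid S]^{\otimes k}\cdot\bigl(\begin{smallmatrix}V\\ I\end{smallmatrix}\bigr)^{\otimes k}$, a two-layer circuit with wire count $(qr+s)^k+(qr+q)^k\le 2\,q^k\bigl((r+1)(r+s/q)\bigr)^k=2\,q^{(1+c)k}$, since each summand is at most the product of the two.

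The genuine gap is your step 4. Chaining $d$ directional stages, each of depth $2$, gives a depth-$2d$ circuit, and ``merge the output layer of block $i$ with the input layer of block $i+1$, since both are linear'' is not a size-preserving operation: the merged layer computes the matrix product $W'W$ of the two layer matrices, and its wire count is $\nnz(W'W)$, which is not controlled by $\nnz(W)+\nnz(W')$ --- collapsing adjacent layers can blow a circuit up to the dense size, which is exactly why depth-$2$ circuits can beat depth-$1$ in the first place. In your construction the merge can be partially rescued because the two layers act on disjoint tensor blocks, so $W'W$ is a Kronecker product of the two sparse maps with identities and $\nnz(W'W)=\nnz(W)\cdot\nnz(W')\cdot q^{n-2k}\approx q^{(1+c/d)n}$; but even then the chain is $Q_1,\ (P_1Q_2),\ (P_2Q_3),\dots,(P_{d-1}Q_d),\ P_d$, which has $d+1$ layers, not $d$, and absorbing an end layer into its neighbor recreates a dense $A^{\otimes n/d}$ factor on one block, spoiling the bound precisely in the interesting regime $c<1$. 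The missing idea is that the rigidity decomposition can place $S$ on either side, $A=[U\mid S]\bigl(\begin{smallmatrix}V\\ I\end{smallmatrix}\bigr)=[U\mid I]\bigl(\begin{smallmatrix}V\\ S\end{smallmatrix}\bigr)$, and one must balance, per block of coordinates, which layer receives the $(qr+s)$-factor and which the $(qr+q)$-factor, so that every one of the $d$ layers is a single Kronecker-structured sparse matrix of size about $q^{n}\bigl((r+1)(r+s/q)\bigr)^{n/d}$. Already for $d=2$ this is how the bound is achieved: layer 1 is $\bigl(\begin{smallmatrix}V\\ I\end{smallmatrix}\bigr)^{\otimes n/2}\otimes\bigl(\begin{smallmatrix}V\\ S\end{smallmatrix}\bigr)^{\otimes n/2}$ and layer 2 is $[U\mid S]^{\otimes n/2}\otimes[U\mid I]^{\otimes n/2}$, each with $q^{(1+c/2)n}$ wires, whereas your scheme yields either depth $4$, or depth $2$ with the unbalanced size $q^n\bigl((r+1)^n+(r+s/q)^n\bigr)$.
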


By applying this lemma, \cite{alman2021kronecker} obtained a depth-2 synchronous circuit of size $O(2^{1.47582n})$ for Walsh-Hadamard matrix $H_n$ from a rank-1 rigidity upper bound $\R_{H_4}(1) \le 96$, improving on the folklore bound of $O(2^{1.5n})$.

Notably, although most prior work on matrix rigidity had given \emph{asymptotic} upper and lower bounds, it is unclear how to use asymptotic rigidity upper bounds for $H_n$, such as those from~\cite{alman2017probabilistic}, to construct improved circuits in this way (see \cite[Page 2]{alman2021kronecker}). Alman instead used the constant-sized bound $\R_{H_4}(1)$. It was later shown that rank-1 rigidity upper bounds cannot give any further improvements~\cite{alman2023smaller}.

We are able to use our new rigidity lower bound, \cref{thm:kronecker_lb}, to prove that only constant-sized rigidity upper bounds may be effective in conjunction with Lemma~\ref{lm:depth-2_circuit}. We prove that, starting from the matrix $A = H_k$ with a large $k$ and picking any rank $1 < r \le O(k)$, the depth-2 circuit for $H_n$ constructed by \cref{lm:depth-2_circuit} will have size at least $\Omega(2^{3n/2})$.
\begin{lemma}
    Let $c_1, c_2$ be the constants in \cref{thm:kronecker_lb}, and $k$ is a large parameter. For any rank $1 < r \le c_1 k$, we have 
    \begin{align*}
        \log_{2^k} \bk{(r + 1) \cdot (r + \R_{H_k}(r) / 2^k)} \ge 1.
    \end{align*}
\end{lemma}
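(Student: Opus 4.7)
The plan is to unwind the logarithm and reduce the statement to a direct calculation that invokes our low-rank lower bound \cref{thm:kronecker_lb} applied to the matrix $H_1$.

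First, I would observe that the inequality $\log_{2^k}\bk{(r+1)(r + \R_{H_k}(r)/2^k)} \ge 1$ is equivalent to $(r+1)(r + \R_{H_k}(r)/2^k) \ge 2^k$. Since we may drop the additive $r$ term on the left, it suffices to show the stronger inequality $(r+1) \cdot \R_{H_k}(r)/2^k \ge 2^k$, i.e., $\R_{H_k}(r) \ge 2^{2k}/(r+1)$.

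Next, I would apply \cref{thm:kronecker_lb} to the $2 \times 2$ matrix $A = H_1$, which has $\rank(H_1) = 2 > 1$. With $q = 2$, this gives constants $c_1 > 0$ and $c_2 \in (0,1)$ such that $\boolRigidity{H_k}{c_1 k} \ge 4^k \bk{1/2 - c_2^k}$ for all positive integers $k$. Since Boolean rigidity lower-bounds regular rigidity, i.e., $\R_{H_k}(c_1 k) \ge \boolRigidity{H_k}{c_1 k}$, and since rigidity is nonincreasing in the rank parameter, for any integer $r$ with $r \le c_1 k$ we obtain
\begin{align*}
  \R_{H_k}(r) \;\ge\; \R_{H_k}(c_1 k) \;\ge\; 4^k \bk*{\tfrac{1}{2} - c_2^k}.
\end{align*}

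Finally, I would combine the two. Because $r > 1$ is an integer, we have $r+1 \ge 3$. Hence
\begin{align*}
  (r+1) \cdot \R_{H_k}(r)/2^k \;\ge\; 3 \cdot 2^k \bk*{\tfrac{1}{2} - c_2^k} \;=\; \tfrac{3}{2} \cdot 2^k - 3 (2 c_2)^k \cdot \text{(if } 2c_2 > 1\text{)},
\end{align*}
more cleanly written as $3 \cdot 2^k(1/2 - c_2^k) \ge 2^k$ whenever $c_2^k \le 1/6$. Since $c_2 < 1$ is an absolute constant, this inequality holds for all sufficiently large $k$, which is exactly the regime the lemma assumes. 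Taking $\log_{2^k}$ of both sides of $(r+1)(r+\R_{H_k}(r)/2^k) \ge 2^k$ yields the claim.

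The main obstacle, such as it is, is purely bookkeeping: ensuring $c_2^k$ is small enough (which follows from $k$ being large and $c_2 < 1$), and remembering that the Boolean rigidity lower bound from \cref{thm:kronecker_lb} transfers to regular rigidity with the same parameters. No new ideas beyond \cref{thm:kronecker_lb}, monotonicity of $\R_{H_k}(\cdot)$, and the integrality of $r$ are needed.
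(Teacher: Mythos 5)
Your proposal is correct and follows essentially the same route as the paper: apply \cref{thm:kronecker_lb} (with the Boolean-to-regular rigidity transfer and monotonicity in the rank) to get $\R_{H_k}(r) \ge 4^k\bk{1/2 - c_2^k} \ge 4^k/3$ for large $k$, then use $r+1 \ge 3$ to conclude $(r+1)\bk{r + \R_{H_k}(r)/2^k} \ge 2^k$. The only difference is cosmetic: you spell out the Boolean-rigidity transfer and the $c_2^k \le 1/6$ threshold explicitly, which the paper leaves implicit.
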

\begin{proof}
    By \cref{thm:kronecker_lb}, we have $\R_{H_k}(r) \ge \R_{H_k}(c_1 n) \ge 4^k \bk{1/2 - c_2^k} \ge 4^k /3$ for large $k$. Hence, 
    \begin{align*}
        &(r + 1) \cdot (r + \R_{H_k}(r) / 2^k) 
        > 3 \R_{H_k}(r) / 2^k \ge 2^k
    \end{align*}
    as $r \ge 2$, which proves the desried inequality.
\end{proof}

{
\bibliographystyle{alpha}
\bibliography{reference.bib}
}

\end{document}